\documentclass[runningheads, envcountsame, a4paper]{llncs}


\usepackage{amssymb}
\usepackage{amsmath}
\usepackage{textcomp}
\usepackage{array}
\usepackage{lineno}
\usepackage{color}
\usepackage{comment}

\usepackage{graphicx}


\setlength{\tabcolsep}{1px}


\newcommand{\ket}[1]{|#1\rangle}
\newcommand{\braket}[2]{\langle #1|#2\rangle}
\newcommand{\Endproof}{\hfill$\Box$\\}

\newtheorem{fact}{Fact}

\newcommand{\uobdd}{UOBDD}
\newcommand{\nuobdd}{NUOBDD}


\title{Nondeterministic unitary OBDDs}

\author{Aida Gainutdinova\inst{1}\fnmsep\thanks{Some parts of this work was done during Gainutdinova's visit to National Laboratory for Scientific Computing (Brazil) in June 2015 supported by CAPES with grant 88881.030338/2013-01.} \and Abuzer Yakary{\i}lmaz\inst{2}\fnmsep\thanks{Partially supported by CAPES with grant 88881.030338/2013-01 and ERC Advanced Grant MQC.}
}

\institute{Kazan Federal University, Kazan, Russia
\\
 \email{aida.ksu@gmail.com}
\and
University of Latvia, Faculty of Computing, R\={\i}ga, Latvia
\\
\email{abuzer@lu.lv}
}

\authorrunning{A. Gainutdinova, A. Yakary\i lmaz} 

\begin{document}
\maketitle

\begin{abstract}
	We investigate the width complexity of nondeterministic unitary OBDDs (NUOBDDs). Firstly, we present a generic lower bound on their widths based on the size of strong 1-fooling sets. Then, we present classically cheap functions that are expensive for NUOBDDs and vice versa by improving the previous gap. We also present a function for which neither classical nor unitary nondeterminism does help. Moreover, based on our results, we present a width hierarchy for NUOBDDs. Lastly, we provide the bounds on the widths of NUOBDDs for the basic Boolean operations negation, union, and intersection. 
\end{abstract}
  
\section {Introduction}

Branching Programs (BPs) are one of the well known computational models, which are important not only theoritically but also practically, such as   hardware verification, model checking and others \cite{Weg00}. The main complexity measures for BP are the size of BP -- its number of nodes and length (time complexity). 
It is well--known that  BPs of polynomial size are equivalent to non-uniform log-space Turing machines. 
 
The important restricted variant of BPs is  Ordered Binary Decision Diagrams (OBDDs), which  are oblivious read-once branching programs \cite{Weg00}. 
Time complexity for OBDD is at most $n$ (the length of the input), and so the natural complexity measure for OBDD is its width. Different   variants of OBDDs such as deterministic, probabilistic, nondeterministic, and quantum  have been considered (e.g. \cite{ak96,NHK00,AGKMP05,SS05,AGKY16A}) and they have been compared in term of their widths. For example, it was shown that randomized OBDDs can be  exponentially more efficient than deterministic and nondeterministic OBDDs \cite{ak96}, and,  quantum OBDDs can be exponentially more effcient  than deterministic and stable probabilistic OBDD and that this bound is tight \cite{AGKMP05}.  In \cite{SS05} some  simple functions were presented such that unitary OBDDs (the known most restricted quantum OBDD) need exponential
size for computing these functions with bounded error, while deterministic OBDDs can represent these functions in linear size. 
Quantum and classical nondeterminism for OBDD models was considered in \cite{AGKY16A}, where the superiority of quantum OBDDs over classical counterparts was shown.
In particular, an explicit function was presented, which  is computed by a quantum nondeterministic OBDD  of constant width, but any classical nondeterministic OBDD  for this function needs non-constant width. 

The OBDDs of constant width can also be considered as a nonuniform analog of one-way finite automata \cite{ag05}. It is well known that classical  nondeterministic automata recognize precisely  regular languages. There are different  variants of nondeterministic quantum finite automata (NQFA) in  literature \cite{nihk02,YS10A,BC01B}.  Nakanishi et al. \cite{nihk02} considered  quantum  finite automata of Kondacs-Watrous type \cite{kw97}, which use measurement at each step of the computation. 
They showed that  (unlike the case of classical finite automata) the class of languages recognizable by NQFAs properly contains the class of all regular languages. A full characterization of the class of languages recognized by all NQFA variants that are at least as general as the Kondacs-Watrous type was presented in \cite{YS10A}. It was shown that they define the class of exclusive stochastic languages.  

Bertoni and Carpentieri \cite{BC01B} considered a weaker model  -- nondeterministic quantum automata of Moore-Crutchfield type \cite{MC00} with a single measurement at the end of a computation.  They showed that the class of languages recognizable by this model does not contain any finite nonempty language but contains a nonregular language. 

In this paper we investigate nondeterministic quantum OBDDs where the model can evolves unitarily, followed by a projective measurement at the end. We call the model as nondeterministic unitary OBDD (NUOBDD). It can be seen as OBDD counterparts of unitary space bounded curcuits \cite{FL16A} or Moore-Crutchfield (measure-once) quantum finite automata \cite{MC00,AY15}. 

Section \ref{Sec:preliminaries} presents the necessary background. We present our results in Section \ref{Sec:main}. We start by presenting a generic lower bound on the widths of NUOBDD based on the size of strong 1-fooling sets (Section \ref{Sec:lower-bound}). Then, we present (i) new quantumly cheap but classical expensive functions by improving the previous gap (Section \ref{Sec:notPerm}) and (ii) classically cheap functions that are expensive for NUOBDDs (Section \ref{Sec:exactk}). We also present a function for which neither classical nor unitary nondeterminism does help (Section \ref{Sec:MOD}). Moreover, based on our results, we present a width hierarchy for NUOBDDs (Section \ref{Sec:hierarhy}). Lastly, we provide the bounds on the widths of NUOBDDs for the basic Boolean operations negation, union, and intersection (Section \ref{Sec:Union_Intersection_Complementation}). We close the paper by Section \ref{Sec:conclusion}.

\section{Preliminaries}
	\label{Sec:preliminaries}
In this section, we provide the necessary background to follow the remaining parts.
First, we give the definitions of the models.
Then, we present some basic facts from linear algebra which will be used in the
proofs.

\subsection{Definitions}

We use superscripts for enumerating  vectors and strings, and, subscripts for enumerating  the elements of vectors and strings.
A $d$-state quantum system ($QS$) can be described by a $d$-dimensional Hilbert space ($ \mathcal{H}^d $) over the field of complex numbers with the norm $||\cdot||_2$. A pure (quantum) state of the $QS$ is described by a column vector  $ \ket{\psi} \in \mathcal{H}^d $, whose length is one (unitary ket-vector), i.e.$ \sqrt{\braket{\psi}{\psi}} =1 $. 
As long as it is a closed system, the evolution of the $QS$ is described by some unitary matrices $U$. In order to retrieve information from the system, we can apply a projective measurement (then the system is no longer closed). We refer the reader to \cite{SayY14} for more details on the finite dimensonal QSs (see \cite{NC00} for a complete reference on quantum computing).

A branching program  (BP) on the variable set $X=\{x_1,\dots, x_n\}$ is a finite directed acyclic graph with one source node and sink nodes partitioned into two sets -- $Accept$ and $Reject$. Each
non-sink node is labelled by a variable $x_i$ and has two outgoing
edges labelled $0$ and $1$, respectively.  An input $\sigma$ is {\em
accepted} if and only if it induces a chain of transitions leading to
a node in $Accept$, otherwise $\sigma$ is rejected.

A BP $ P $ computes a Boolean function $f:\{0,1\}^n \rightarrow \{0,1\}$ iff  $ P $ accepts all $\sigma \in f^{-1}(1)$  and $P$ rejects all $\sigma\in f^{-1}(0)$.

A BP is {\em oblivious} if its nodes can be partitioned
into levels $V_0, \ldots, V_\ell$  such that
 nodes in $V_{\ell}$ are sink nodes, nodes in each level
$V_j$ with $0\le j < \ell$ have outgoing edges only to nodes in the next
level $V_{j+1}$, and all nodes in the level $V_j$ query the same
bit $\sigma_{i_{j+1}}$ of the input. If on each computational path from the source node to a sink node each variable from $X$ is tested at most once,  then such BP is  called \textit{read-once} BP. 

In this paper, we investigate read-once oblivious BPs that are commonly called as {\em Ordered Binary Decision Diagrams} (OBDDs). Since the lengths of OBDDs are fixed, the main complexity measure for them is their widths, i.e. for OBDD $P$, $width(P)=\max _{j} |V_j|$. The width of OBDDs can be seen as the number of states of finite automata and so we can refer the widths also as the sizes of OBDDs.

A nondeterministic OBDD (NOBDD) can have the ability of making more than one outgoing transition for each tested input bit from each node and so the program can follow more than one computational path and if one of the paths ends with an accepting node, then the input is accepted. Otherwise (all computation paths end with some rejecting nodes), the input is rejected.

Quantum OBDDs (QOBDDs) are non-trivial generalizations of classical OBDDs \cite{AGKY16A} when using general quantum operators like superoperators \cite{Wat09A}. Here we focus on a restricted version of QOBDDs that evolves only unitarily followed by a projective measurement at the end \cite{agk01}.: unitary OBDDs (UOBDDs).

Formally a \uobdd~$ M_n $, defined on the variable set $X=\{x_1,\dots, x_n\}$, with width $d$ (operating on $ \mathcal{H}^d $) is a quadruple
\[
	M_n=\bigl(Q, \ket{\psi^0},T, Q_{acc} \bigr ),
\]
where $ Q = \{q_1,\ldots,q_d\} $ is the set of states such that the set $ \{ \ket{q_1},\ldots,\ket{q_d} \} $ form a basis for $ \mathcal{H}^d $,  $\ket{\psi^0}\in \mathcal{H}^d$ is the initial quantum state,
$Q_{acc} \subseteq Q $ is the set of accepting states, and $T=\{ ( i_j, U_j(0), U_j(1)) \}_{j=1}^n$ is a sequence of  instructions such that  $i_j$ determines a variable $x_{i_j}$ tested at the step $j$, $U_j(0)$ and $U_j(1)$ are unitary
transformations defined over $\mathcal{H}^d$. 

For any given input $ \sigma \in  \{0,1\}^n $, the computation of $ M $ can be traced by a unitary vector, which is initially  
$ \ket{\psi^0} $. At the $j$-th step ($j=1, \dots, n$) the input bit $ x_{i_j} $ is tested and then the corresponding unitary operator is applied:
\[
	\ket{\psi^j} = U_{j}(\sigma_{i_j}) \ket{\psi^{j-1}},
\]
where $ \ket{\psi^{j-1}} $ and $ \ket{\psi^j} $ represent the quantum states after the $ (j-1)^{th} $ and $ j^{th} $ steps, respectively.

After all input bits are read, the following projective measurement is applied: 
$ P = \{ P_{acc},P_{rej}  \} $, where both $ P_{acc} $ and $ P_{rej} $ are diagonal 0-1 matrices such that $P_{acc}[j,j]=1$ iff  $q_j \in Q_{acc} $ and $ P_{rej} = I - P_{acc} $. Here $ P_{acc} $ ($P_{rej}$) projects any quantum state into the subspace spanned by accepting (non-accepting/rejecting) basis states. Then, the accepting probability of $ M_n $ on $ \sigma $ is calculated from the final state vector $\ket{\psi^n}$ as follows:
\[
	Pr^{M_n}_{accept}(\sigma)=||P_{acc}\ket{\psi^n}||^2.
\]

It is clear that $M_n$ defines a probability distribution over the inputs from $ \{0,1\}^n $. By picking some threshold between 0 and 1, we can classify the inputs as the ones accepted with probability greater than  the threshold and the others. Picking threshold as 0 is a special case and also known as nondeterministic acceptance mode for probabilistic and quantum models \cite{ADH97,YS10A}.

Nondeterministic \uobdd ~(\nuobdd) is a \uobdd, say $ N_n $, that is restricted to compute the Boolean function $f$ with threshold 0: each member of $ f^{-1}(1) $ is accepted with non-zero probability by $N_n$ and each member of $ f^{-1}(0) $ is accepted with zero probability by $N_n$. Then we say that $ f $ is computed by \nuobdd~ $ N_n $. 

A probabilistic OBDD (POBDD) $P_n$ can be defined in the same way as \uobdd~ $M_n$ with the following modifications: the initial state is a stochastic vector $ (v^0) $, each transformation is a stochastic matrix (the ones at the $j$-th levels are $ A_j(0) $ and $ A_j(1) $). Then, the computation is traced by a stochastic vector: at the $j$-th step ($j=1, \dots, n$) the input bit $ x_{i_j} $ is tested and then the corresponding stochastic operator is applied:
\[
	v^j = A_{j}(\sigma_{i_j}) v^{j-1},
\]
where $ v^{j-1} $ and $ v^j $ represent the probabilistic states after the $ (j-1)^{th} $ and $ j^{th} $ steps, respectively. Lastly, the accepting probability is calculated from the final vector as follows:
\[
	Pr^{P_n}_{accept}(\sigma)= \sum_{q_i \in Q_{acc}} v^n_i.
\]
If the initial probabilistic state and each stochastic matrix in $P_n$ is restricted to have only 0s and 1s, then all the computations become deterministic and so $ P_n $ is called a deterministic OBDD. If we do the same restriction to $ M_n $, then we obtain again a deterministic OBDD but its computation must be reversible (0-1 unitary matrices are also known as permutation matrices) and so it is called a (classical) reversible OBDD (ROBDD). Similar to quantum nondeterminism, $ P_n $ with threshold 0 forms an NOBDD. Besides  a POBDD or \uobdd~ is called exact if it accepts any input with probability either 1 or 0. Then, the corresponding function is called to be computed exactly.

\newcommand{\cobddnd}{\mathsf{OBDD_n^d}}
\newcommand{\cnobddnd}{\mathsf{NOBDD_n^d}}
\newcommand{\cnuobddnd}{\mathsf{NUOBDD_n^d}}

The classes $ \cobddnd $, $ \cnobddnd $, and $ \cnuobddnd $ are formed by the Boolean functions defined on $ \{0,1\}^n $ that can be respectively computed by OBDDs, NOBDDs, and \nuobdd s with width at most $ d $.


\subsection{Some facts from Linear Algebra} \label{Linear}
Let  $V$ be a vector space over the field $\mathbb{C}$ of complex numbers with the norm $||\cdot||_2$. We denote by ${\bf 0}$ zero element  of $V$. Here are the properties of norm:

\begin{enumerate}
	\item $||\psi||=0 \Rightarrow \psi={\bf 0}$; 
	\item $\forall \psi,\phi \in V, ||\psi+\phi||\leq ||\psi||+||\phi||$ (triangle inequality); and,
	\item $\forall \alpha \in \mathbb{C}, \forall \psi \in V, ||\alpha\, \psi||=|\alpha|\cdot ||\psi||$.
\end{enumerate}

A set of vectors  $\Psi=\{\psi_1, \psi_2, \dots , \psi_d \}\in V$ is linearly dependent iff there are  $\alpha_1, \alpha_2, \dots, \alpha_d \in \mathbb{C}$ such that
$\alpha_1 \psi_1 + \cdots + \alpha_d \psi_d = {\bf 0}$ and $\alpha_j \neq 0$  for some $j \in \{1, \dots, d\}$. 
If $\alpha_1 \psi_1 + \cdots + \alpha_d \psi_d = {\bf 0}$ only when $\alpha_1 = \alpha_2 = \dots = \alpha_d = 0$, then the set $ \Psi  $ is linearly independent.

It is known that a  set of vectors $\Psi=\{\psi_1, \psi_2, \dots , \psi_d \}$ is linearly independent iff either $\Psi =\emptyset$ or $\Psi$ consists
of a single element $\psi \neq {\bf 0}$, or $|\Psi|\geq 2$ and no vector $\psi_j\in \Psi$ can be expressed
as a linear combination of the other vectors of $\Psi.$
When  the set $\Psi$ is not empty and is linearly independent then $\psi_j\neq {\bf 0}$ for all $j$ and no vectors $\psi_i,\psi_j $ are collinear. If a set $\Psi$ is linearly independent, then every $\Psi'$ ($\Psi' \subseteq \Psi$) is linearly independent.

If $\Psi$ is a set of linearly independent vectors and $\psi \notin \Psi$ can not be expressed as a linear combination of the vectors from  
$\Psi$, then the set $\Psi \cup \{\psi\}$ obtained by adding
$\psi$ to the set $\Psi$ is linearly independent.

\begin{lemma}\label{cor1}
Let $ \{\psi_1, \psi_2, \dots , \psi_d \}\in V $ be a linearly independent set of vectors and $ U $ be a unitary transformation of the space $V$. Then,  the set of vectors $ \{U\psi_1,$ $U\psi_2, \ldots ,$ $U\psi_d\} $ is linearly independent.
\end{lemma}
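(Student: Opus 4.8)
The plan is to argue directly from the definition of linear independence, leaning only on the two features of a unitary operator that matter here: it is linear and it preserves the norm. First I would take an arbitrary vanishing linear combination of the image vectors, say $\alpha_1 U\psi_1 + \cdots + \alpha_d U\psi_d = {\bf 0}$ with $\alpha_1,\dots,\alpha_d \in \mathbb{C}$, and use linearity of $U$ to rewrite the left-hand side as $U(\alpha_1\psi_1 + \cdots + \alpha_d\psi_d)$. Thus $U(\alpha_1\psi_1 + \cdots + \alpha_d\psi_d) = {\bf 0}$.

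Next I would invoke norm preservation: since $U$ is unitary, $||U\phi|| = ||\phi||$ for every $\phi \in V$, so taking $\phi = \alpha_1\psi_1 + \cdots + \alpha_d\psi_d$ yields $||\alpha_1\psi_1 + \cdots + \alpha_d\psi_d|| = ||{\bf 0}|| = 0$. By the first norm property recorded above (namely $||\psi|| = 0 \Rightarrow \psi = {\bf 0}$), this forces $\alpha_1\psi_1 + \cdots + \alpha_d\psi_d = {\bf 0}$. Now the linear independence of $\{\psi_1,\dots,\psi_d\}$ assumed in the hypothesis gives $\alpha_1 = \cdots = \alpha_d = 0$. Since the $\alpha_j$ were arbitrary coefficients of a vanishing combination of $\{U\psi_1,\dots,U\psi_d\}$, this set is linearly independent, as claimed.

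The argument is essentially routine; the only point I would be careful about is making the very first step legitimate, i.e.\ using "unitary" in the sense of a linear, norm-preserving (equivalently, inner-product-preserving) operator, which is exactly what justifies both the pull-out of $U$ by linearity and the norm identity $||U\phi|| = ||\phi||$. Notably, no appeal to finite-dimensionality or to $U$ being an invertible matrix is needed — norm preservation alone does the work — so the same proof would go through verbatim for the later applications of this lemma.
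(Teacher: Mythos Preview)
Your proof is correct and follows the same skeleton as the paper's: start from a vanishing linear combination of the images, pull $U$ outside by linearity, and deduce that the preimage combination vanishes. The only difference is in how the last step is justified. The paper uses invertibility explicitly: it applies $U^{\dagger}$ and invokes $U^{\dagger}U=I$ to recover $\alpha_1\psi_1+\cdots+\alpha_d\psi_d={\bf 0}$ (and phrases the whole thing as a proof by contradiction). You instead use the isometry property $||U\phi||=||\phi||$ together with the norm axiom $||\psi||=0\Rightarrow\psi={\bf 0}$ to reach the same conclusion directly. Both are standard and equally short; your version has the minor advantage you note, that it never needs an explicit inverse, while the paper's version ties in more directly with the matrix language ($U^{\dagger}U=I$) used elsewhere.
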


\proof
Assume the set $ \{\psi'_1, \psi'_2, \dots , \psi'_d: \psi'_j=U\psi_j, j=1, \dots, d \}$ is  linearly dependent. Then there are $\alpha_1, \alpha_2, \dots, \alpha_d \in \mathbb{C}$
such that
$\alpha_1 \psi'_1 + \cdots + \alpha_d \psi'_d = {\bf 0}$ and $\alpha_j \neq 0$  for some $j \in \{1, \dots, d\}$.  Because $U$  is a unitary transformation it is hold $U^{\dagger}U=I$, where $I$ is the identity matrix and $U^{\dagger}$ is the conjugate transpose of $U$.  By the linearity of  transformation we have
$\alpha_1 U^{\dagger}\psi'_1 + \cdots + \alpha_d U^{\dagger}\psi'_d = {\bf 0}$.  But $\alpha_1 U^{\dagger}\psi'_1 + \cdots + \alpha_d U^{\dagger}\psi'_d = \alpha_1 U^{\dagger}U\psi_1 + \cdots + \alpha_d U^{\dagger}U\psi_d= \alpha_1\psi_1 + \cdots + \alpha_d\psi_d=  {\bf 0}.$ This is a contradiction. 
\Endproof

\begin{lemma}\label{lem1} Let $\psi_1,\dots, \psi_m,\psi \in V$, and vectors $\psi_1,\dots, \psi_m$ are linearly independent. Let $U$ be a linear map in $V$ such that 
$
	||U{\ket{\psi_i}}||=0 
$
for $ i=1, \dots, m $ and
$
	||U{\ket{\psi}}||> 0
$.
Then the set $\{\psi_1,\dots, \psi_m, \psi\} $ is  linearly independent.
\end{lemma}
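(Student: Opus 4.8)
The plan is to reduce everything to the linear-algebra fact recalled just before Lemma~\ref{cor1}: if $\Psi$ is linearly independent and $\psi$ is not a linear combination of the vectors in $\Psi$, then $\Psi\cup\{\psi\}$ is linearly independent. Applying this with $\Psi=\{\psi_1,\dots,\psi_m\}$ (which is linearly independent by hypothesis), the whole proof comes down to showing that $\psi$ cannot be written as a linear combination of $\psi_1,\dots,\psi_m$.

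I would establish this by contradiction. Suppose $\ket{\psi}=\alpha_1\ket{\psi_1}+\cdots+\alpha_m\ket{\psi_m}$ for some scalars $\alpha_1,\dots,\alpha_m\in\mathbb{C}$. Applying the linear map $U$ and using linearity gives $U\ket{\psi}=\alpha_1 U\ket{\psi_1}+\cdots+\alpha_m U\ket{\psi_m}$. By hypothesis $\|U\ket{\psi_i}\|=0$ for every $i$, so by the first norm property ($\|v\|=0\Rightarrow v={\bf 0}$) each $U\ket{\psi_i}={\bf 0}$; hence the right-hand side is ${\bf 0}$, and therefore $\|U\ket{\psi}\|=0$. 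This contradicts the hypothesis $\|U\ket{\psi}\|>0$.

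Consequently no such linear combination exists (the degenerate case $\psi=\psi_i$ for some $i$ is ruled out by the same computation, since it would force $\|U\ket{\psi}\|=\|U\ket{\psi_i}\|=0$), and the quoted fact yields that $\{\psi_1,\dots,\psi_m,\psi\}$ is linearly independent. There is no real obstacle in this argument; the only point that needs care is that $U$ is assumed merely linear — not unitary, not invertible — so the contradiction must be drawn purely from linearity together with the norm axiom, without invoking anything like $U^{\dagger}U=I$ as in Lemma~\ref{cor1}.
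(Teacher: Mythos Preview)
Your argument is correct and follows essentially the same contradiction strategy as the paper: assume $\psi$ lies in the span of $\psi_1,\dots,\psi_m$, apply $U$, and deduce $\|U\psi\|=0$. The only cosmetic difference is that the paper bounds $\|U\psi\|$ via the triangle inequality and homogeneity (norm properties 2 and 3), obtaining $\|U\psi\|\le\sum_i|\alpha_i|\,\|U\psi_i\|=0$, whereas you invoke property~1 to conclude each $U\psi_i=\mathbf{0}$ and then sum zero vectors. Both routes are equally valid and equally short.
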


\proof Suppose that the set $\{\psi_1,\dots, \psi_m,\psi \}$ is linearly dependent. Then there are   $\alpha_1, \alpha_2, \dots, \alpha_m \in \mathbb{C}$ such that $\psi=\alpha_1 \psi_1 + \cdots + \alpha_m  \psi_m$ and $\alpha_j \neq 0 $ for some $j=1, \dots, m$.
By linearity of $U$ we have $U\psi=U(\alpha_1 \psi_1 + \cdots + \alpha_m \psi_m)=\alpha_1 U\psi_1 + \cdots + \alpha_m  U\psi_m$.
Using the properties 2 and 3 of norm we have
$$||U\psi||\leq|\alpha_1|\cdot ||U\psi_1|| + \cdots + |\alpha_m| \cdot ||U\psi_m||.$$
Since by the hypothesis $||U\psi_1||= \cdots = ||U\psi_m||=0$ then we have $||U\psi||=0$. This is a contradiction. 
 \Endproof

\section{Main results}
\label{Sec:main}

In this section we present our results under six subsections.   It has already been known that nondeterministic quantum  OBDDs can be more efficient than classical ones.  In \cite{AGKY14} some functions were presented that are computed by \nuobdd s with constant width but NOBDDs need at least logarithmic width ($\Omega(\log n)$).  In Section \ref{Sec:notPerm}, we present an example of Boolean function based on which we can obtain a better superiority result.

\subsection{A lower bound for NUOBDDs}\label{Sec:lower-bound}

Let $f:\{0,1\}^n\rightarrow\{0,1\}$ be an arbitrary function and $\pi=(i_1,\ldots , i_n)$ be a permutation of $\{1, \ldots, n\}$. For a given $X=\{x_1,\dots, x_n\}$, an integer $k$ ($0<k<n$) and a permutation $\pi$, $X^{\pi}_k $ denotes $\{x_{i_1},\dots, x_{i_k}\}$. Any possible assignment on $X^{\pi}_k $, say $\sigma\in \{0,1\}^k$, is denoted by $\rho_{\pi, k}^{\sigma}:X^{\pi}_k \rightarrow \sigma$. Then $f|_{\rho_{\pi, k}^{\sigma}}$ is called  a subfunction obtained from $f$ by applying $\rho^{\sigma}_{\pi, k}$.

A set  $S^{\pi}_k=\{(\sigma,\gamma): \sigma\in \{0,1\}^k, \gamma\in \{0,1\}^{n-k}\}$ is called  a {\em strong 1-fooling set}  for $f$ if 
\begin{itemize}
\item $f|_{\rho^{\sigma}_{\pi, k}}(\gamma)=1$ for each $(\sigma, \gamma) \in S^{\pi}_k$ ,
\item if $(\sigma, \gamma),(\sigma', \gamma')\in S^{\pi}_k $, then  $f|_{\rho^{\sigma}_{\pi, k}}(\gamma')=0$ and $f|_{\rho^{\sigma'}_{\pi, k}}(\gamma)=0$.
\end{itemize} 
 
Let $\sigma, \sigma' \in \{0,1\}^k$.   We say that the string  $\gamma \in \{0,1\}^{n-k}$ distinguishes 
the string $\sigma$ from the string $\sigma'$, if  $f|_{\rho^{\sigma}_{\pi, k}}(\gamma) > 0 $ and $f|_{\rho^{\sigma'}_{\pi, k}}(\gamma)=0.$ Note that this definition is not symmetric. 

\begin{theorem} 
	\label{fooling_set} 
	Let \nuobdd~ $N_n$ computes a function $f:\{0,1\}^n\rightarrow \{0,1\}$  reading variables in an order $\pi=(i_1,\dots , i_n)$. Then $$Width(N_n) \geq \max_k|S^{\pi}_k|.$$
\end{theorem}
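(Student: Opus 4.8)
The plan is to fix an arbitrary $k$ together with a strong 1-fooling set $S^{\pi}_k=\{(\sigma^1,\gamma^1),\dots,(\sigma^m,\gamma^m)\}$ of size $m=|S^{\pi}_k|$, and to exhibit $m$ linearly independent vectors inside the $d$-dimensional space $\mathcal{H}^d$ on which $N_n$ operates, where $d=Width(N_n)$. This immediately forces $m\le d$, and since $k$ was arbitrary we conclude $Width(N_n)\ge\max_k|S^{\pi}_k|$.

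First I would introduce the partial computations. Since $N_n$ queries the variables in the order $\pi=(i_1,\dots,i_n)$, the first $k$ instructions touch exactly the variables of $X^{\pi}_k$; hence on any input whose restriction to $X^{\pi}_k$ agrees with $\sigma^i$, the state of $N_n$ after step $k$ is one fixed unit vector $\ket{\psi_i}:=U_k(\sigma^i_k)\cdots U_1(\sigma^i_1)\ket{\psi^0}$, depending on $\sigma^i$ alone. Likewise, the composition of the last $n-k$ unitaries applied when the remaining variables are set according to $\gamma^i$ is a single unitary $V_i$ depending only on $\gamma^i$, and the acceptance probability of $N_n$ on $(\sigma^j,\gamma^i)$ equals $\|P_{acc}V_i\ket{\psi_j}\|^2$. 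Translating the two defining conditions of a strong 1-fooling set through this dictionary gives $\|P_{acc}V_i\ket{\psi_i}\|>0$ for every $i$, and $\|P_{acc}V_i\ket{\psi_j}\|=0$ whenever $i\ne j$. (I would also note in passing that these conditions force the prefixes $\sigma^1,\dots,\sigma^m$ to be pairwise distinct, so there really are $m$ vectors in play.)

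The core of the argument is the claim that $\{\ket{\psi_1},\dots,\ket{\psi_m}\}$ is linearly independent, which I would prove by induction on $t$, showing that $\{\ket{\psi_1},\dots,\ket{\psi_t}\}$ is linearly independent for every $t\le m$. For $t=1$ this holds because $\ket{\psi_1}$, being a product of unitaries applied to the unit vector $\ket{\psi^0}$, is nonzero. For the inductive step, assume $\{\ket{\psi_1},\dots,\ket{\psi_t}\}$ is linearly independent and apply Lemma~\ref{lem1} with the linear map $U:=P_{acc}V_{t+1}$: by the translated fooling-set conditions, $\|U\ket{\psi_i}\|=\|P_{acc}V_{t+1}\ket{\psi_i}\|=0$ for $i=1,\dots,t$ because each input $(\sigma^i,\gamma^{t+1})$ is rejected, whereas $\|U\ket{\psi_{t+1}}\|=\|P_{acc}V_{t+1}\ket{\psi_{t+1}}\|>0$ because $(\sigma^{t+1},\gamma^{t+1})$ is accepted with positive probability. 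Lemma~\ref{lem1} then yields that $\{\ket{\psi_1},\dots,\ket{\psi_{t+1}}\}$ is linearly independent, closing the induction; taking $t=m$ produces $m$ linearly independent vectors in $\mathcal{H}^d$, hence $m\le d$.

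I do not expect a genuine obstacle in this proof; the one place requiring care is the bookkeeping that connects the purely combinatorial fooling-set definition to the normed hypotheses needed by Lemma~\ref{lem1}, and in particular the verification that $\ket{\psi_i}$ and $V_i$ are functions of $\sigma^i$ and $\gamma^i$ alone — this is exactly where the obliviousness of the model and the fixed reading order $\pi$ are used. Everything else is a direct application of Lemma~\ref{lem1} (Lemma~\ref{cor1} is not strictly needed here, though it is the natural companion fact and could be cited if one prefers to phrase the step via unitary invariance of linear independence rather than via the post-measurement map).
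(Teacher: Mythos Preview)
Your proposal is correct and follows essentially the same approach as the paper: both arguments show that the quantum states $\ket{\psi(\sigma^i)}$ reached after processing the prefixes $\sigma^i$ are linearly independent, which forces $Width(N_n)\ge |S^{\pi}_k|$. The only cosmetic difference is that you establish linear independence by induction, invoking Lemma~\ref{lem1} at each step, whereas the paper argues by direct contradiction (in effect re-deriving Lemma~\ref{lem1} inline rather than citing it).
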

\begin{proof}
	Let $d=\max_k|S^{\pi}_k|$ and let $l$ be an index providing $|S^{\pi}_{l}|=d $, and $S^{\pi}_l=\{(\sigma^1,\gamma^1),\ldots,(\sigma^{d},\gamma^{d})\}$.  Consider the $l$-th level of $N_n$. 
Let  $\Psi=\{\ket{\psi(\sigma^j)}\, | \, j=1, \ldots, d\}$ be a set of state vectors of program
 $N_n$ after processing inputs  $\sigma^1, \ldots, \sigma^{d}$, i.e. $\ket{\psi(\sigma^j)}=U(\sigma^j)\ket{\psi^0}.$

\begin{claim}
	The set $\Psi$ is  linearly independent. 
\end{claim}
\begin{proof} 
	Assume that $\Psi$ is not linearly independent. Then there is a quantum state $\ket{\psi} =\ket{\psi(\sigma^i)} \in \Psi$  expressed as a linear combination of the others in $\Psi$:
\[
	\ket{\psi(\sigma^i)}=\sum_{j=1\atop j\neq i}^{d}\alpha_j \ket{\psi(\sigma^j)},
\]
and  $\alpha_j \neq 0 $ for some $j$.

Let $\gamma^i$ be a string such that $(\sigma^i, \gamma^i)\in S^{\pi}_l$. Then, by definition, for every input $\sigma^j$ $(j\neq i)$, we have  $f|_{\rho^{\sigma^j}_{\pi, k}}(\gamma^i)=0$, and program $N_n$ accepts the inputs $\sigma^j\gamma^i$ with zero probability:
 $$Pr_{accept}^{N_n}(\sigma^j\gamma^i)=||P_{acc}U(\gamma^i)\ket{\psi(\sigma^j)}||^2=0.$$
That means $||P_{acc}U(\gamma^i)\ket{\psi(\sigma^j)}||=0$.

The final quantum state for the input $\sigma^i\gamma^i$ is
\[
	\ket{\psi(\sigma^i\gamma^i)}=U(\gamma^i)\ket{\psi(\sigma^i)}=U(\gamma^i)\sum_{j=1\atop j\neq i}^{d}\alpha_j \ket{\psi(\sigma^j)}
\]
and by linearity we can follow that 
\[
	\ket{\psi(\sigma^i\gamma^i)}=\sum_{j=1\atop j\neq i}^{d}\alpha_j U(\gamma^i)\ket{\psi(\sigma^j)}=
\sum_{j=1\atop j\neq i}^{d}\alpha_j \ket{\psi(\sigma^j\gamma^i)}.
\]
Then, the accepting probability of the input $\sigma^i\gamma^i$ can be calculated as
\[
	Pr_{accept}^{N_n}(\sigma^i\gamma^i) =||P_{acc}\ket{\psi(\sigma^i\gamma^i)}||^2 =
\]
\[
	||\sum_{j=1\atop j\neq i}^{k}\alpha_j P_{acc} \ket{\psi(\sigma^j\gamma^i)}||^2 \leq 
(\sum_{j=1\atop j\neq i}^{k}|\alpha_j| \,|| P_{acc} \psi(\sigma^j\gamma^i)||)^2=0.
\]
However, $f|_{\rho^{\sigma^i}_{\pi, k}}(\gamma^i)>0$ and $N_n$ must accept this input with nonzero probability. Since this is a contradiction, the set $\Psi$ is linearly independent. $ \blacktriangleleft $
\end{proof}

Since the set  $\Psi$ of the state vectors of  $N_n$ at the  $l$-th level is linearly independent and its size is $d$ ($|\Psi|=d$), then  the dimension of the space of states of $N_n$  cannot be less than $d$: $Width(N_n)\geq d$.
\qed\end{proof}

\newcommand{\modpn}{\mathtt{MOD^p_{n}}}

\subsection{Function notPerm}
\label{Sec:notPerm}

Let $n=m^2$ for some $m>0$. We define function $\mathtt{notPERM_n}: \{0,1\}^{n} \rightarrow \{0,1\}$ as
\[ 
	\mathtt{notPERM_n}(\sigma) = \left\lbrace \begin{array}{lll}
		0 & , & \mbox{if } A(\sigma) \mbox{ is a permutation matrix,}\\
		1 & , & \mbox{otherwise,}
	\end{array}	 \right. 
\]
where the input bits are indexed as
$$
	x_{1,1},\dots,x_{1,m},x_{2,1},\dots,x_{2,m},\ldots,\dots,x_{m,1},\ldots,x_{m,m}
$$ and $x_{i,j}$ is $\sigma_{i,j}$, the $ (i,j) $-th entry of $ A $.
Note that $A$ is a permutation matrix if and only if it contains exactly one 1 in every row and in every column. 

The column and row summations of $ A $ can be represented by a $ 2m $ digit integer in base $ (m+1) $:
\[
	T(A) = (c_m c_{m-1} \cdots c_1 r_m r_{m-1} \cdots r_1), 
\]
where $ c_i $ and $ r_i $ are the summations of the entries in $i$-th column and $j$-th row, respectively, for $ 1 \leq i,j \leq m $. Then $ T(A) $ can be a value between 0 and $ T_{max} = (m+1)^{2m} -1 $, i.e. between $ (0 \cdots 0) $ and $ (m \cdots m) $. It can be easily verified that $ A $ is a permutation matrix if and only if $ T(A) = (1 \cdots 1) = \sum_{i=0}^{2m-1} (m+1)^i = T_{perm} $.

\begin{theorem}
	\label{quantum_notPerm}
	Function $\mathtt{notPERM_n}$ is computed by  a width-2 \nuobdd~$N_n$.
\end{theorem}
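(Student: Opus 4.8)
The plan is to exhibit a one‑qubit (hence width‑$2$) \nuobdd~that folds the number $T(A)$ into the angle of a planar rotation and accepts with non‑zero probability exactly when $T(A)\neq T_{perm}$. The starting point is the additive structure of $T(A)$: in the base‑$(m+1)$ expansion $T(A)=(c_m\cdots c_1\,r_m\cdots r_1)$ the digit $r_i$ carries weight $(m+1)^{i-1}$ and $c_j$ carries weight $(m+1)^{m+j-1}$, so flipping a single entry $x_{i,j}$ from $0$ to $1$ increases $T(A)$ by exactly $w_{i,j}:=(m+1)^{i-1}+(m+1)^{m+j-1}$ (there is no carrying since every $r_i$ and $c_j$ stays in $\{0,\dots,m\}$). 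Consequently $T(A)=\sum_{\,i,j:\ \sigma_{i,j}=1} w_{i,j}$ is a weighted count of the $1$'s of $\sigma$, and, as recalled above, $A$ is a permutation matrix iff this weighted count equals $T_{perm}$.

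Next I would fix a rotation angle $\theta$ (to be chosen below) and write $R(\alpha)$ for the rotation with matrix $\bigl(\begin{smallmatrix}\cos\alpha&-\sin\alpha\\\sin\alpha&\cos\alpha\end{smallmatrix}\bigr)$ acting on $\mathcal{H}^2$. Define $N_n$ with state set $Q=\{q_1,q_2\}$, $Q_{acc}=\{q_2\}$, initial state $\ket{\psi^0}=R(-\theta\,T_{perm})\ket{q_1}$, and, at the step that queries $x_{i,j}$, the operators $U(0)=I$ and $U(1)=R(\theta\,w_{i,j})$. Since planar rotations commute and add their angles, after reading all of $\sigma$ the state is
\[
	\ket{\psi^n}=R\!\Bigl(\theta\!\!\sum_{\,i,j:\ \sigma_{i,j}=1}\!\!w_{i,j}\Bigr)R(-\theta\,T_{perm})\ket{q_1}=R\bigl(\theta\,(T(A)-T_{perm})\bigr)\ket{q_1},
\]
so that $Pr^{N_n}_{accept}(\sigma)=\bigl|\braket{q_2}{\psi^n}\bigr|^2=\sin^2\!\bigl(\theta\,(T(A)-T_{perm})\bigr)$. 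By the same commutativity this construction is in fact insensitive to the variable order.

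It remains to pick $\theta$ so that this probability vanishes exactly when $T(A)=T_{perm}$. Put $k=T(A)-T_{perm}$; since $0\le T(A)\le T_{max}$ and $0\le T_{perm}\le T_{max}$ we have $|k|\le T_{max}$, and $\sin^2(\theta k)=0$ iff $\theta k\in\pi\mathbb{Z}$. Choosing $\theta=\pi/(T_{max}+1)=\pi/(m+1)^{2m}$ makes $\theta k\in\pi\mathbb{Z}$ equivalent to $(T_{max}+1)\mid k$, which for $|k|\le T_{max}$ forces $k=0$ (any $\theta$ with $\theta/\pi$ irrational works equally well). Hence $Pr^{N_n}_{accept}(\sigma)=0$ whenever $\mathtt{notPERM_n}(\sigma)=0$ and $Pr^{N_n}_{accept}(\sigma)>0$ whenever $\mathtt{notPERM_n}(\sigma)=1$, i.e. $N_n$ is a width‑$2$ \nuobdd~computing $\mathtt{notPERM_n}$.

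The genuinely delicate points are the two design choices: the positional weights $w_{i,j}$, which must compress the two families of constraints ``every row sum is $1$'' and ``every column sum is $1$'' into the single scalar equality $T(A)=T_{perm}$; and the pre‑rotation by $-\theta\,T_{perm}$ absorbed into $\ket{\psi^0}$, which breaks the degeneracy of angle $0$ (without it the all‑zero input, which is not a permutation matrix, would be wrongly rejected). Everything else is routine: the $U(0),U(1)$ are honest $2\times2$ rotations and hence unitary, and the probability identity is just the composition of commuting rotations followed by the projective measurement onto $\ket{q_2}$.
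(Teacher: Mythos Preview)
Your proof is correct and follows the same construction as the paper: a two-state \nuobdd\ that encodes $T(A)$ as the cumulative angle of commuting planar rotations, with a pre-rotation by $-\theta\,T_{perm}$ absorbed into the initial state so that the final amplitude on $\ket{q_2}$ is $\sin\bigl(\theta(T(A)-T_{perm})\bigr)$. The only cosmetic difference is your choice $\theta=\pi/(T_{max}+1)$ in place of the paper's $\theta=\pi/T_{max}$; your denominator makes the divisibility argument marginally cleaner (and sidesteps a boundary issue when $m=1$), but the idea is identical.
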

\proof 
	The  $ \nuobdd $ $N_n$ has two states $ \{q_1,q_2\} $, $ q_2 $ is the only accepting state, and $N_n$ operates on $ \mathbb{R}^2 $. Let $ \alpha $ be the angle of $ \frac{\pi}{T_{max}} $. The initial state is 
	\[
		\cos( - T_{perm} \alpha ) \ket{q_1} +
		\sin( - T_{perm} \alpha ) \ket{q_2},
	\]
	the point on the unit circle away from $ \ket{q_1} $ by angle $ T_{perm}(A)\alpha $ in clockwise direction. After reading the input, $ N_n $ makes a counter clockwise rotation with angle $ T(A) \alpha $, i.e., it rotates with angle $ \alpha\left( (m+1)^i + (m+1)^{m+j} \right) $ if $ x_{i,j} = 1 $ and it applies identity operator if $ x_{i,j} = 0 $. 
	
	If $ A $ is a permutation matrix, it makes a total rotation with angle $ T_{perm}\alpha $ and so the final quantum state becomes $ \ket{q_1} $. Thus, the input is accepted with zero probability. 
	
	If $ A $ is not a permutation matrix, then the amplitude of $ \ket{q_2} $ in the final quantum state always takes a nonzero value and so the input is always accepted with nonzero probability.	
	Note that $ N_n $ can make at most $ \pi $ degree rotation.
 \Endproof

It is known that function $\mathtt{PERM_n}$ ($ \neg \mathtt{notPERM_n}$) is not efficiently computed by classical read-once BPs, where $ \mathtt{PERM_n}(\sigma) = 1 $ iff $A(\sigma)$ is a permutation matrix. By using a known lower bound given for BPs, we can obtain a lower bound for NOBDDs solving $ \tt notPERM_n $.
 \begin{fact}
 	\cite{KMW91}
 	\label{KMW91} 
 	The size of any nodeterministic read-once BP, computing $\mathtt{PERM_n}$, cannot be less than $2^m/(2\sqrt{m})$, where $m=\sqrt{n}$.
 \end{fact}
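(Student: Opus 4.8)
The plan is to prove this as a purely combinatorial statement about nondeterministic read-once branching programs, adapting the strong $1$-fooling-set idea of Theorem~\ref{fooling_set} to the non-oblivious setting, where the program -- and even different nondeterministic branches -- may query the cells of $A$ in different orders. I would work directly with the $m!$ inputs on which $\mathtt{PERM_n}$ evaluates to $1$, namely the permutation matrices $A_\pi$ with $A_\pi[i,j]=1 \iff \pi(i)=j$ for $\pi\in S_m$, and exhibit many of them that must be routed through pairwise distinct nodes.

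The first step is an easy but essential observation: in any nondeterministic read-once branching program $P$ computing $\mathtt{PERM_n}$, every accepting computation path must query all $m^2$ cells. Indeed, if an accepting path $p$ for a permutation matrix $\sigma$ skipped some cell $(i,j)$, then modifying $\sigma$ only at $(i,j)$ -- setting it to $0$ if it was the unique $1$ of its row, or to $1$ otherwise -- yields an input that is not a permutation matrix, yet $p$ is still a valid accepting path on the modified input (it never looks at $(i,j)$), contradicting correctness. Hence every accepting path has length exactly $m^2$ and reads the cells in some order. The second step is the fooling argument: for each $\pi$ fix an accepting path $p_\pi$, and let $v_\pi$ be the node on $p_\pi$ reached immediately after the $\lceil m/2\rceil$-th row of $A_\pi$ has been completely queried, with $R_\pi\subseteq[m]$, $|R_\pi|=\lceil m/2\rceil$, the set of rows completed by that point. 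If $v_\pi=v_{\pi'}=v$, I would try to glue the prefix of $p_\pi$ up to $v$ with the suffix of $p_{\pi'}$ out of $v$ into a single hybrid input $z$: then $z$ is accepted by $P$, but the rows in $R_\pi$ carry their $1$'s exactly in the columns $\pi(R_\pi)$ while the remaining rows carry their $1$'s in the columns $[m]\setminus\pi'(R_{\pi'})$, so unless $R_\pi=R_{\pi'}$ and $\pi(R_\pi)=\pi'(R_{\pi'})$ the matrix $z$ has a repeated or a missing $1$ and is not a permutation matrix -- a contradiction. Consequently the map $\pi\mapsto v_\pi$ separates permutations with distinct completed-row sets $R_\pi$, and since every $\lceil m/2\rceil$-subset of $[m]$ occurs as some $R_\pi$ (reorder the queries so that those rows finish first), $P$ has at least $\binom{m}{\lceil m/2\rceil}$ nodes, and $\binom{m}{\lceil m/2\rceil}\ge 2^m/(2\sqrt m)$.

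The hard part is exactly the well-definedness of the hybrid input $z$: in a non-oblivious nondeterministic read-once program the set of cells queried before reaching a node is \emph{not} determined by the node, so two accepting paths through the same $v$ may have incomparable ``past'' cell-sets, and then the prefix of $p_\pi$ and the suffix of $p_{\pi'}$ need not partition the matrix. Making the argument go through requires choosing the marked node so that ``which rows are already finished'' is forced to be a clean cut of the variable set (or, alternatively, tolerating a bounded ambiguity in the past-set), which is precisely where the slack between $\binom{m}{\lceil m/2\rceil}$ and the cleaner bound $2^m/(2\sqrt m)$ -- the extra factor $\tfrac12$ and the $\sqrt m$ -- is spent; the remaining ingredient is just the standard asymptotics of the central binomial coefficient. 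This is the content of \cite{KMW91}, which I would cite for the precise bookkeeping rather than reproduce in full.
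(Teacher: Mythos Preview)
The paper does not prove this statement at all: it is stated as a \emph{Fact} and attributed to \cite{KMW91}, with no argument given. So there is nothing to compare your sketch against in the paper itself; the paper's ``proof'' is simply the citation.

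Your outline is in the spirit of the original argument, and your first observation (that every accepting path must query all $m^2$ cells) is correct and is indeed the starting point. You also correctly single out the real obstacle, namely that in a non-oblivious nondeterministic read-once program the variables read before a node $v$ are not determined by $v$, so gluing the prefix of $p_\pi$ with the suffix of $p_{\pi'}$ need not define a consistent input. That is exactly where the work in \cite{KMW91} lies.

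There is, however, one step you treat as routine that is not: the claim that ``every $\lceil m/2\rceil$-subset of $[m]$ occurs as some $R_\pi$ (reorder the queries so that those rows finish first)''. You cannot reorder the queries; the branching program is fixed, and the set $R_\pi$ is determined by the program and the chosen accepting path, not by you. In an oblivious program, for instance, $R_\pi$ would be the same set for every $\pi$, and your counting would yield nothing. The actual argument in \cite{KMW91} is not a direct fooling-set construction that realises all subsets, but a counting argument over the $m!$ permutation matrices: one shows that permutations routed through the same marked node must share enough structure that only a bounded number can coincide, and then divides $m!$ by that bound. Since you already defer the delicate bookkeeping to \cite{KMW91}, your proposal effectively reduces to the paper's own treatment---cite the fact---just with a longer preamble.
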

  \begin{theorem}
 	\label{classical_notPerm} 
 	The width of any NOBDD computing $\mathtt{notPERM_n}$ cannot be less than $\sqrt{n}-\frac{5}{4}\log n -1$. 
 \end{theorem}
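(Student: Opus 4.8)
The plan is to reduce the claim to Fact~\ref{KMW91} by determinising. Suppose $N_n$ is an NOBDD of width $d$ computing $\mathtt{notPERM_n}$ in some variable order $\pi$. First I would apply the standard subset (powerset) construction: the deterministic state reached after a prefix is the set $S\subseteq Q$ of states in which $N_n$ could be at that point. This produces a \emph{deterministic} OBDD $D_n$ with the same order $\pi$ and width at most $2^{d}$ that accepts an input iff the reached subset meets the accepting set. Since $\mathtt{PERM_n}=\neg\,\mathtt{notPERM_n}$, letting $D_n$ instead accept exactly when the reached subset is \emph{disjoint} from $Q_{acc}$ turns it into a deterministic OBDD $D_n'$ of width at most $2^{d}$ computing $\mathtt{PERM_n}$.

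Next, $D_n'$ is in particular a deterministic read-once branching program, hence also a nondeterministic read-once branching program, for $\mathtt{PERM_n}$. An OBDD of length $n$ with width at most $2^{d}$ has at most $n\cdot 2^{d}$ nodes, so the size of $D_n'$ is at most $n\cdot 2^{d}$. By Fact~\ref{KMW91}, with $m=\sqrt n$,
\[
	n\cdot 2^{d} \geq \frac{2^{m}}{2\sqrt{m}} .
\]
Taking logarithms (base $2$) gives $d\geq m-\log n-\log(2\sqrt m)=m-1-\frac{1}{2}\log m-\log n$, and substituting $m=\sqrt n$ (so $\log m=\frac{1}{2}\log n$) yields $d\geq \sqrt n-\frac{5}{4}\log n-1$.

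The argument is essentially routine once one notices that, although NOBDDs are not closed under complement with a small width increase, determinisation costs only a factor $2^{d}$ in width --- i.e.\ roughly $d$ additive bits --- whereas the \cite{KMW91} lower bound for $\mathtt{PERM_n}$ is $2^{\Omega(m)}=2^{\Omega(\sqrt n)}$; taking logarithms of both sides is exactly what converts that exponential branching-program size bound into a $\sqrt n$-scale width bound. The only delicate part is the bookkeeping: the node count of the determinised OBDD and the final arithmetic must be pinned down precisely so that the constants match the stated bound $\sqrt n-\frac{5}{4}\log n-1$. The determinisation and complementation steps themselves are standard, and I do not expect a genuine obstacle.
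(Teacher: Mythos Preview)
Your argument is correct and is essentially the paper's own proof: determinise the width-$d$ NOBDD via the subset construction to obtain a deterministic OBDD of width $\leq 2^{d}$ (hence size $\leq n\cdot 2^{d}$), complement to get $\mathtt{PERM_n}$, and confront this with the $2^{m}/(2\sqrt{m})$ lower bound of Fact~\ref{KMW91}. The paper phrases the intermediate step as ``NOBDD of width $d$ $\Rightarrow$ deterministic OBDD of size $2^{d}$'' and separately invokes $width\geq size/n$, but the logic and the final arithmetic $d\geq \log\bigl(2^{m}/(2n\sqrt{m})\bigr)=\sqrt{n}-\tfrac{5}{4}\log n-1$ are identical to yours.
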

\proof  
	Since deterministic PB is a particular case of nondeterministic PB,  by Fact \ref{KMW91} we have that the size of any deterministic read-once PB computing $\mathtt{PERM_n}$ cannot be less than $2^m/(2\sqrt{m})$. Then, the size of any deterministic OBDD computing $\mathtt{notPERM_n}$ cannot be less than $2^m/(2\sqrt{m})$, too.  Having a lower bound for size, we can easily obtain a lower bound for width: since read-once PB has at least $n$ levels, then by the Pingeonhole principle we have that   $width(P)\geq size(P)/n$ for any read-once PB $P$. Next we can use the following  well-known relation between deterministic and nondeterministic space complexity: 
if a Boolean function $f$ is computed by an NOBDD of width $d$, then 	there exists a deterministic OBDD of size $2^d$ that computes $f$.  From this we conclude that any NOBDD, computing  $\mathtt{notPERM_n}$, has width at least $\log (2^m/(2n\sqrt{m}))$. Taking into consideration $n=m^2$   we get the lower bound for width of NOBDD computing  $\mathtt{notPERM_n}$.
\Endproof

Remark that any NOBDD can be simulated by a nondeterministic QOBDD with the same width if quantum model can use superoperators \cite{AGKY14}. However, as shown here, NOBDDs and \nuobdd s with the same widths are incomparable under certain bounds.

\subsection{Function EXACT}
	\label{Sec:exactk}

\newcommand{\exactk}{\mathtt{EXACT^k_n}}
\newcommand{\notexactk}{\mathtt{notEXACT^k_n}}
\newcommand{\andn}{\mathtt{AND_n}}

We continue with a classically cheap but \textit{unitarily} expensive function: 
$ \exactk: \{0,1\}^{n}\rightarrow \{0,1\}.$
	\[ \exactk(\sigma) = \left\lbrace 
	\begin{array}{lll}
		1 & , & if \,\#_1(\sigma)= k, \\
		0 & , & \mbox{otherwise, }\\
    \end{array}	 \right. \]
where $\#_1(\sigma)$ is a number of 1s in $\sigma$. If $k=n$, then  we have the function $\andn: \{0,1\}^{n}\rightarrow \{0,1\}$ that equals 1  iff the input does not contain any 0.
\begin{theorem}\label{quantum-upper-bound-exactk}
	There exists a \uobdd~$M_n$ with width $d=\max\{k+1, n-k+1\}$ that computes $ \exactk $ exactly (and so nondeterministically).
\end{theorem}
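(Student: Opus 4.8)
The plan is to realize $M_n$ as a reversible modular counter: every operator will be a permutation matrix, so on each input the final state is a single basis vector and the final measurement returns probability exactly $0$ or $1$. That makes the ``exact'' (hence ``nondeterministic'') part automatic and reduces the whole proof to a one-line modular-arithmetic check. Throughout write $\#_0(\sigma) = n - \#_1(\sigma)$ for the number of $0$s in $\sigma$.

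First I would reduce to the case $2k \le n$. Since $\exactk(\sigma) = \mathtt{EXACT^{n-k}_n}(\overline{\sigma})$ for the bitwise complement $\overline{\sigma}$, a width-$d$ \uobdd{} for $\exactk$ is obtained from one for $\mathtt{EXACT^{n-k}_n}$ simply by swapping $U_j(0)$ and $U_j(1)$ at every level; and $\max\{k+1,n-k+1\}$ is symmetric in $k \leftrightarrow n-k$. So assume $2k \le n$, whence $d = n-k+1$.

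Next I would write $M_n$ down explicitly: states $Q=\{q_0,\dots,q_{n-k}\}$ (so $|Q|=d$), initial state $\ket{q_0}$, accepting set $Q_{acc}=\{q_{n-k}\}$, and, reading the bits in the natural order, $U_j(1)=I_d$ and $U_j(0)=C$ for every $j$, where $C$ is the cyclic shift $C\ket{q_i}=\ket{q_{(i+1)\bmod(n-k+1)}}$. Both $I_d$ and $C$ are permutation matrices, hence unitary, so this is a legal width-$d$ \uobdd.

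Finally I would verify correctness. As $I_d$ and $C$ commute, the net operator on input $\sigma$ is $C^{\#_0(\sigma)}$, so the final state is $\ket{q_{\#_0(\sigma)\bmod(n-k+1)}}$, a basis vector; hence $P_{acc}$ yields acceptance probability $1$ if this equals $\ket{q_{n-k}}$ and $0$ otherwise, so $M_n$ is exact. It accepts iff $\#_0(\sigma)\equiv n-k\pmod{n-k+1}$, and since $0\le\#_0(\sigma)\le n$ the only residue-$(n-k)$ values in that range are $n-k$ and $2(n-k)+1$; the hypothesis $2k\le n$ forces $2(n-k)+1\ge n+1>n$, so acceptance is equivalent to $\#_0(\sigma)=n-k$, i.e.\ $\#_1(\sigma)=k$. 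Thus $M_n$ computes $\exactk$ exactly with width $n-k+1=\max\{k+1,n-k+1\}$. The construction itself is routine; the one place care is needed --- and the sole use of $2k\le n$ --- is ruling out the counter's wrap-around, i.e.\ checking $2(n-k)+1\notin\{0,\dots,n\}$. No separate corner cases arise: $k=n$ is covered by the reduction (it becomes $\mathtt{EXACT^0_n}$ with counter length $n+1$, recovering $\andn$), and $k=0$ is the direct case with $d=n+1$. One may also note that $M_n$ is in fact reversible, so even a ROBDD attains this width.
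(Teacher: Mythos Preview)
Your proof is correct and follows essentially the same approach as the paper: both build $M_n$ as a reversible (ROBDD) modular counter with cyclic-shift transitions, then use the hypothesis on $k$ to rule out wrap-around. The only cosmetic difference is that the paper treats $k\ge n/2$ as the base case and counts $1$s (handling $k<n/2$ by counting $0$s instead), whereas you treat $2k\le n$ as the base case and count $0$s, reducing the other case via bitwise complement; the two are symmetric and your wrap-around check $2(n-k)+1>n$ is exactly the paper's observation that the accepting state cannot be visited twice.
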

\begin{proof}
Assume that $k\geq n/2$. Then $d=k+1$. We design $ M_n $ as an ROBDD.
	Let $ \{ \ket{q_0},\ldots,\ket{q_{k-1}},\ket{q_{k}} \} $ be the basis states of $M_n$, $ \ket{q_0} $ is the initial quantum state, and $ \{ q_{k} \} $ is the only accepting state. When $M_n$ reads 0, the quantum state is not changed; and, when it reads 1, the quantum state  $ \ket{q_j} $ is changed to $ \ket{q_{j+1 \mod {(k+1)}}}$  for $ 0 \leq j \leq k $. So, if $M_n$ reads $ k $ 1s, the quantum state is set to $ \ket{q_{k}} $ and so the input is accepted with probability 1. Otherwise,  the input is accepted with probability 0. The property $k\geq n/2$ guaranties that $M_n$ can not visit $q_{k}$ twice.

If $ k < n/2$, $ M_n $ simply counts 0s instead of 1s in the above algorithm. 
\qed\end{proof}

\begin{theorem}
	\label{quantum-lower-bound-exactk}
	The width of any \nuobdd~ computing $\exactk$ cannot be less than  $ max\{k+1, n-k+1\} $.
\end{theorem}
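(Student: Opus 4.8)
The plan is to apply the generic lower bound of Theorem~\ref{fooling_set}: it suffices to exhibit, for every variable order $\pi$, a value of $k$ and a strong 1-fooling set $S^{\pi}_k$ for $\exactk$ of size $\max\{k+1, n-k+1\}$. Since the function $\exactk$ is symmetric (it depends only on $\#_1(\sigma)$), the order $\pi$ is irrelevant; for concreteness I would just take $\pi$ to be the identity and split the $n$ variables into the first block $X^{\pi}_{k}$ of size $k$ (using the theorem's parameter $k$ for the split, which I will distinguish in notation from the target count, say call the target count $t$, so $\exact$ means $\#_1 = t$). Actually, to keep notation close to the paper, I would choose the split point so that both $t+1$ and $n-t+1$ arise naturally: split after the first $t$ variables to get size $t+1$, and split after the first $n-t$ variables to get size $n-t+1$, then take whichever is larger.

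The key construction: assume first the split is after the first $t$ positions. For each $i$ with $0 \le i \le t$, let $\sigma^i \in \{0,1\}^{t}$ be the assignment to the first block with exactly $i$ ones, and let $\gamma^i \in \{0,1\}^{n-t}$ be the assignment to the remaining block with exactly $t-i$ ones (this is well-defined because $0 \le t-i \le t \le n-t$... I need $t \le n-t$, i.e.\ $t \le n/2$; otherwise I use the symmetric construction splitting after $n-t$ positions and counting zeros). Set $S^{\pi}_{t} = \{(\sigma^0,\gamma^0),\ldots,(\sigma^{t},\gamma^{t})\}$. Then $\#_1(\sigma^i\gamma^i) = i + (t-i) = t$, so $\exact(\sigma^i\gamma^i)=1$ for every $i$; and for $i \ne j$, $\#_1(\sigma^i\gamma^j) = i + (t-j) \ne t$ since $i \ne j$, so $\exact(\sigma^i\gamma^j)=0$ — which verifies both the first condition ($f|_{\rho^{\sigma^i}}(\gamma^i)=1$) and the cross condition ($f|_{\rho^{\sigma^i}}(\gamma^j)=0$ and $f|_{\rho^{\sigma^j}}(\gamma^i)=0$). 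This gives $|S^{\pi}_{t}| = t+1$. Symmetrically, when $t \ge n/2$, splitting after the first $n-t$ positions and using blocks with $j$ ones and $t-j$ ones respectively (now $0 \le j \le n-t$ and $0 \le t-j \le n-t$ both need checking — the binding constraint is $t \le 2(n-t)$ when... ) yields a fooling set of size $n-t+1$. In either case $\max_k|S^{\pi}_k| \ge \max\{t+1, n-t+1\}$, and Theorem~\ref{fooling_set} finishes it.

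The one point requiring care — the "main obstacle," such as it is — is the bookkeeping on which block is long enough to host the required number of ones: to place $t-i$ ones in the second block of length $n-t$ we need $t-i \le n-t$ for all $i \ge 0$, i.e.\ $t \le n-t$; and to realize the size-$(n-t+1)$ fooling set we instead put the "short side" as the distinguishing block. This is exactly the same case split ($t \ge n/2$ versus $t < n/2$) that appears in the matching upper bound Theorem~\ref{quantum-upper-bound-exactk}, so it is natural and causes no real trouble; I would simply prove the case $t \le n/2$ in detail and remark that the case $t > n/2$ follows by exchanging the roles of $0$ and $1$ (equivalently, $\exact$ on $\sigma$ equals $\mathtt{EXACT}^{n-t}_n$ on the bitwise complement of $\sigma$, and complementation does not change OBDD width).
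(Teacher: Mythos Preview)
Your approach via Theorem~\ref{fooling_set} cannot reach the stated bound: the largest strong $1$-fooling set for $\exactk$ has size only $\min\{k+1,n-k+1\}$, not $\max\{k+1,n-k+1\}$. At any split point $\ell$, every pair $(\sigma,\gamma)$ in a strong $1$-fooling set must satisfy $\#_1(\sigma)+\#_1(\gamma)=k$, and the cross condition forces the values $\#_1(\sigma)$ to be pairwise distinct; hence the fooling-set size is at most
\[
\bigl|\{a:\ 0\le a\le \ell,\ 0\le k-a\le n-\ell\}\bigr|\ \le\ \min(k,n-k)+1
\]
for every $\ell$. The extreme case $k=n$ (i.e.\ $\andn$) makes this vivid: there is a single accepting input, so every strong $1$-fooling set has size $1$, yet the theorem claims width $\ge n+1$.

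Your own construction already shows the mismatch. When $t\le n/2$ you build a set of size $t+1$, but in that regime $\max\{t+1,n-t+1\}=n-t+1$; your ``symmetric'' construction for $t\ge n/2$ yields size $n-t+1$, while there $\max\{t+1,n-t+1\}=t+1$. In both branches you obtain the minimum, and the concluding sentence ``in either case $\max_k|S^{\pi}_k|\ge\max\{t+1,n-t+1\}$'' is simply false.

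The paper does not route through Theorem~\ref{fooling_set} here. It argues level by level: assuming $k\ge n/2$, at level $l$ it applies $U_l(0)$ to the previously independent set of states (unitarity preserves linear independence, Lemma~\ref{cor1}) and then adjoins the state reached on $1^l$, using the \emph{one-sided} distinguishing string $1^{k-l}0^{n-k}$ via Lemma~\ref{lem2}. The one-sided condition is strictly weaker than the two-sided strong fooling property, and the use of unitarity to carry independence forward is exactly what lets the count grow to $k+1=\max\{k+1,n-k+1\}$ rather than stalling at $\min\{k+1,n-k+1\}$.
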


\proof
Let $N_n=\bigl(Q, \ket{\psi^0},T, Q_{acc} \bigr )$ be an \nuobdd~ that computes $\exactk$, $\pi=(i_1,\dots , i_n)$ be an order of reading variables used  by  $N_n$, and $d=\max\{k, n-k\}$.

The computation begins from the initial configuration  $\ket{\psi^0}$. The input is of the form $\sigma=\sigma_1 \cdots \sigma_{n}$.  After the $l$-th step of the computation $(1 \leq l \leq n-1)$, the variables $x_{i_1}, \dots, x_{i_l}$ are read by $N_n$ and the configuration is $\ket{\psi^l(\sigma_{i_1}\cdots \sigma_{i_l})}$.  At the $(l+1)$-th  step, $N_n$ reads the next variable $x_{i_{l+1}}=\sigma_{i_{l+1}}$ and the new configuration becomes $\ket{\psi^{l+1}(\sigma_{i_1}\cdots \sigma_{i_l} \sigma_{i_{l+1}})}=U_{l+1}(\sigma_{i_{l+1}})\ket{\psi^l(\sigma_{i_1}\cdots \sigma_{i_l})}$. At the end of the computation, the projective measurement is applied to the resulting configuration $\ket{\psi^{n}(\sigma_{i_1}\cdots \sigma_{i_{n}})}$, and then, the probability of accepting the input is calculated as $Pr_{acc}^{N_n}(\sigma)=||P_{acc} \ket{\psi^{n}(\sigma_{i_1}\cdots \sigma_{i_{n}})}||^2.$

The idea behind our proof is as follows. For each level  $l$ ($l=0, \dots, d $) of  $ N_n $, we consider the set of all possible quantum states and then focus on a maximal subset that is linearly independent. Then we can give a lower bound on the size of this subset. 

Let $\Psi_l=\{ \ket{ \psi^l(\sigma) }: \sigma\in\{0,1\}^{l}\}$ be the set of all possible quantum states after the $l$-th step, i.e. $ \ket{ \psi^l(\sigma) } = U_l (\sigma_{l})\cdots U_1(\sigma_{1}) \ket{ \psi^0 } $. 


\begin{lemma}
	\label{lem2} 
	Let $ \ket{ \psi^1 } ,\dots, \ket{ \psi^m }, \ket{ \psi } \in \Psi_l$ and $ \ket{ \psi^1 }, \dots, \ket{ \psi^m } $ be linearly independent for some $ m \geq 1 $, where $ \ket{ \psi^i } = \ket{ \psi^l(\sigma^i) } $ for $i=1, \dots, m$ and $ \ket{ \psi } = \ket{ \psi^l(\sigma) } $.  If there exists a string $\gamma\in \{0,1\}^{n-l}$ that distinguishes the string $\sigma$ from each of the  strings  $\sigma^1,\dots, \sigma^m$, then the set $\{ \ket{ \psi^1},\dots, \ket{\psi^m}, \ket{\psi} \}$ is  linearly independent. 
\end{lemma}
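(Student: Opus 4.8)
The plan is to reduce the statement directly to Lemma \ref{lem1} by exhibiting a single linear map that annihilates every $\ket{\psi^i}$ but not $\ket{\psi}$. First I would fix the distinguishing string $\gamma\in\{0,1\}^{n-l}$ supplied by the hypothesis and let $U(\gamma)$ denote the composition of the unitary instructions $N_n$ applies while reading the remaining variables $x_{i_{l+1}},\dots,x_{i_n}$ under the assignment $\gamma$; this is a well-defined unitary, hence linear, operator on $\mathcal{H}^d$. I then set $L=P_{acc}\,U(\gamma)$, which is again linear, being the composition of the linear projector $P_{acc}$ with a unitary.

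Next I would translate the distinguishing hypothesis into norm conditions on $L$. For each $i=1,\dots,m$, consider the length-$n$ input whose restriction to $X^{\pi}_l$ is $\sigma^i$ and whose restriction to the other variables is $\gamma$: its final state is $U(\gamma)\ket{\psi^i}$ and its acceptance probability is $||P_{acc}U(\gamma)\ket{\psi^i}||^2$. Since $f|_{\rho^{\sigma^i}_{\pi,l}}(\gamma)=0$ and $N_n$ computes $f$ nondeterministically (threshold $0$), this probability is $0$, so $||L\ket{\psi^i}||=0$. Similarly, because $\gamma$ distinguishes $\sigma$ from each $\sigma^i$ we have $f|_{\rho^{\sigma}_{\pi,l}}(\gamma)>0$, so $N_n$ accepts the corresponding input with nonzero probability and $||L\ket{\psi}||>0$.

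Finally, since $\ket{\psi^1},\dots,\ket{\psi^m}$ are linearly independent by hypothesis, applying Lemma \ref{lem1} with the linear map $L$ gives that $\{\ket{\psi^1},\dots,\ket{\psi^m},\ket{\psi}\}$ is linearly independent, which is exactly the claim. I do not expect a genuine obstacle: the proof is essentially a single invocation of Lemma \ref{lem1}. The only point requiring a little care is bookkeeping — checking that $U(\gamma)$ is indeed the operator corresponding to the suffix of the computation under the ordering $\pi$, and that concatenating $\sigma^i$ (respectively $\sigma$) with $\gamma$ yields a legitimate length-$n$ input whose acceptance probability equals $||L\ket{\psi^i}||^2$ (respectively $||L\ket{\psi}||^2$). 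This lemma will then serve, together with an induction on $l$, to build up a large linearly independent family of reachable states and thereby lower-bound the width.
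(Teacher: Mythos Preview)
Your proposal is correct and follows essentially the same approach as the paper: define $U=U_n(\gamma_{n-l})\cdots U_{l+1}(\gamma_1)$, observe that the distinguishing hypothesis gives $\|P_{acc}U\ket{\psi^i}\|=0$ for each $i$ and $\|P_{acc}U\ket{\psi}\|>0$, and invoke Lemma~\ref{lem1} with the linear map $P_{acc}U$. Your write-up is in fact slightly more explicit than the paper's about why the acceptance probabilities translate into the required norm conditions, but the argument is the same.
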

\begin{proof}
	Let $U=U_n(\gamma_{n-l})\cdots U_{l+1}(\gamma_1).$ It is given that $|| P_{acc} U \ket{ \psi^i } ||=0$ for each $i=1, \dots, m,$ and $ || P_{acc} U \ket{ \psi } ||>0$.  Due to Lemma \ref{lem1}, we can follow that the set $\{ \ket{ \psi^1},\dots, \ket{\psi^m}, \ket{\psi} \}$ is linearly independent.
	$\blacktriangleleft$
\end{proof}


Let $\Phi_l$ ($\Phi_l\subseteq \Psi_l$) be the maximal set of linearly independent vectors.  We will estimate the cardinality of $\Phi_l$ by induction on $l$ ($l=0, \dots, d$).  We will consider two cases: when $k\geq n/2$ and when $k<n/2$.

{\bf Case 1}. First we assume $k\geq n/2$ that is $d=k$. 

Initial step: At the level $l=0$, the set $\Psi_0$ consists of a single vector $\ket{\psi^0}$. So we have  $|\Phi_0|=1$. At the level $l=1$, the set $\Psi_1$ contains two vectors $\ket{\psi^1(0)},\ket{\psi^1(1)} $. It is clear that these vectors are linearly independent since the string $\gamma = 1^{k-1}0^{n-k}$ distinguishes the string $1$ from the string $0$.

Induction step (for $l=2,\dots, d$): At the $(l-1)$-th level, we assume that $\Phi_{l-1} \subseteq \Psi_{l-1} $ has at least $ l $ elements, say $ \ket{ \psi^{j_0}},  \dots, \ket{ \psi^{j_{l-1}}} $, where the corresponding inputs are $\sigma^{j_0}, \dots, \sigma^{j_{l-1}} \in \{0,1\}^{l-1}$ respectively.
 
At the $l$-th step, $N_n$ reads the value $x_{i_{l}}=\sigma_{i_l}$. Due to Lemma \ref{cor1} (Section \ref{Linear}), we know that   the set $\Phi_{l}^0=\{U_{l}(0) \ket{ \psi^{j_0}  }, \ldots, U_{l}(0) \ket{ \psi^{j_{l-1}} } \}$  is linearly independent. It is clear that $ \ket{ \psi^l(1^l) } = U_{l}(1) U_{l-1}(1) \cdots U_{1}(1) \ket{\psi^0}  $ is not a member of $\Phi_{l}^0$. Moreover, the string $ 1^{k-l}0^{n-k} $ distinguishes $ 1^l $ from each of $ \sigma^{j_1}0, \dots, \sigma^{j_{l}}0 $. Therefore, due to Lemma \ref{lem2}, we can follow that the set $\Phi_{l}^0 \cup \{ \ket{ \psi^l(1^l) } \}$ is linearly independent. Thus, $\Phi_l $ contains at least $ (l+1) $ elements, i.e. $ \ket{ \psi^{j_0}},  \dots, \ket{ \psi^{j_{l}}},$ and $ \ket{ \psi^l(1^l) } $.

Therefore, $ \Phi_{d} $ has at least $ d+1 $ elements and so the dimension of quantum states must be at least $ d+1 $. 

{\bf Case 2}. Now assume that $k<n/2$ and therefore $d=n-k$. It is clear that $\exactk(\sigma)=1$ iff $\#_0(\sigma)=n-k$, where $\#_0(\sigma)$ denotes the number of 0s in $\sigma$ and  we have $n-k\geq n/2$. We can apply the same reasoning as in the previous case by interchanging 0 and 1. 

Therefore, in both cases $ \Phi_{d} $ has at least $ d+1 $ elements and so the dimension of quantum states must be at least $ d+1 $, where $d=\max\{k, n-k\}$.  Since there is a \nuobdd~ with width  $ (d+1) $ to solve $ \exactk $, we can also conclude that $ |\Phi_{d}| = d+1 $.
\Endproof

\begin{theorem} 
	\label{exact-classical-upper}
	The  function $\exactk$ is computed by an OBDD $D_n$ with width $\min(k+1,n-k+1)$ $+1$. 
\end{theorem}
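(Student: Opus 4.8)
The plan is to exhibit an explicit deterministic OBDD. First I would reduce to the case $k\le n/2$ by symmetry: since $\exactk(\sigma)=1$ if and only if $\#_0(\sigma)=n-k$, an OBDD that counts the number of $0$s rather than the number of $1$s computes $\exactk$ just as well, and when $k>n/2$ we have $n-k<n/2$ and $\min(k+1,n-k+1)=n-k+1$. So it suffices to treat $k\le n/2$, where $\min(k+1,n-k+1)=k+1$ and the target width is $k+2$.

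For $k\le n/2$ I would build $D_n$ with state set $\{q_0,q_1,\dots,q_k,q_\perp\}$, i.e.\ $k+2$ nodes, reading the variables in the order $x_1,\dots,x_n$ with $q_0$ the source. The intended invariant is: after reading a prefix $\sigma$, the program is in $q_j$ exactly when $\#_1(\sigma)=j\le k$, and in the ``overflow'' node $q_\perp$ exactly when $\#_1(\sigma)>k$. Accordingly, from $q_j$ with $0\le j\le k-1$ a bit $0$ keeps the state and a bit $1$ moves to $q_{j+1}$; from $q_k$ a bit $0$ keeps the state and a bit $1$ moves to $q_\perp$; and $q_\perp$ is absorbing on both symbols. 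The unique accepting sink is $q_k$ and every other sink rejects. Correctness is then immediate from the invariant: the final state is $q_k$ precisely when $\#_1(\sigma)=k$, that is, precisely when $\exactk(\sigma)=1$.

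For the width bound I would simply observe that every level of $D_n$ is a subset of the $k+2$ nodes $\{q_0,\dots,q_k,q_\perp\}$, hence $width(D_n)\le k+2=\min(k+1,n-k+1)+1$; the symmetric construction counting $0$s handles $k>n/2$ and gives $width(D_n)\le (n-k)+2=\min(k+1,n-k+1)+1$ in the same way.

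There is no real obstacle here; the only point I would stress is \emph{why} this beats the reversible construction of Theorem \ref{quantum-upper-bound-exactk}. In a reversible (0-1 unitary) OBDD each level must implement a permutation, so one is essentially forced to keep all $\max\{k+1,n-k+1\}$ residue classes alive; in a general deterministic OBDD we may instead collapse every configuration with more than $\min\{k,n-k\}$ ones (resp.\ zeros) into a single reject sink, which is exactly the role of the node $q_\perp$ and is what lowers the width from $\max\{k+1,n-k+1\}$ to $\min\{k+1,n-k+1\}+1$.
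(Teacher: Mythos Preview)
Your proposal is correct and follows essentially the same approach as the paper: both construct a deterministic OBDD that counts the number of $1$s (respectively $0$s, after the same symmetry reduction) using states $q_0,\dots,q_k$ plus a single absorbing ``overflow'' state, accepting exactly in $q_k$. Your write-up is in fact cleaner than the paper's, which contains some indexing slips in the state labels; your added remark contrasting with the reversible construction is extra but harmless.
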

\begin{proof}
Let $k< n/2$. The OBDD $D_n$ uses the order $\pi=(1,\ldots, n)$ and has  states $ q_0,\ldots, q_{k+1}, q_{k+2} .$ The initial state is $q_0$ and the only accepting state is $q_{k+1} $. $D_n$ counts number of 1s in the input moving from the current state $q_i$ to the state $q_{i+1}$ ($i=1, \ldots, k+1$) when reading 1, and does not changing the current state when reading 0. After entering the state $q_{k+2},$ that happens only when it reads the $(k+1)$-th 1, $D_n $ never leaves this state.  So, only for members of $\exactk$, $ D_n $ starts in $q_0$, reaches $q_{k} $ and stays there until the end, and so accepts the input. 

Let $k\geq n/2$, and then $n-k <n/2$. The OBDD $D_n$ is constructed in the same way by counting 0s instead of 1s. 
\qed\end{proof}

\begin{theorem} 
	\label{exactk-classical-lower}
	The width of any NOBDD computing $\exactk$ cannot be less than $\min(k+1,$ $n-k+1)+1$.  
\end{theorem}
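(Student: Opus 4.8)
The plan is to run a ``fooling''-style counting argument at a single, carefully chosen level of the NOBDD, and then squeeze out one more node than a fooling set alone would give.

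First I would make two harmless reductions. Swapping the labels $0$ and $1$ on the out-edges of every node turns any NOBDD for $\exactk$ into one of the same width for $\mathtt{EXACT}^{n-k}_{n}$ (it reverses the bit-count), so it suffices to prove the bound for $k<n/2$, where $\min(k+1,n-k+1)+1=k+2$. Renaming variables, I may assume the tested order is $\pi=(1,\dots,n)$. I will also use the standard convention -- consistent with the phrasing ``all computation paths end with some rejecting nodes'' in the model definition -- that every non-sink node has an outgoing edge for each bit value, so that every input induces at least one full computation path. Now fix the level $l=k+1$ (it exists, since $k+1\le n-k\le n$), let $V_{k+1}$ be its node set, and aim to exhibit $k+2$ distinct nodes in $V_{k+1}$.

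For each $j\in\{0,\dots,k\}$, let $\nu_j$ be the input whose first $k+1$ bits are $1^{j}0^{k+1-j}$ and whose remaining $n-k-1$ bits carry exactly $k-j$ ones (possible because $k-j\le k\le n-k-1$). Then $\#_1(\nu_j)=j+(k-j)=k$, so $P$ accepts $\nu_j$ along some path; let $u_j\in V_{k+1}$ be the node this accepting path visits at level $k+1$, and write the path as $\mathrm{source}\xrightarrow{\alpha_j}u_j\xrightarrow{\beta_j}\mathrm{Accept}$ with $\alpha_j=1^{j}0^{k+1-j}$ and $\beta_j$ the suffix, $\#_1(\beta_j)=k-j$. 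The key sub-claim is that $u_0,\dots,u_k$ are pairwise distinct: if $u_a=u_b$ with $a\neq b$, then $\mathrm{source}\xrightarrow{\alpha_a}u_a=u_b\xrightarrow{\beta_b}\mathrm{Accept}$ is a legal computation path (read-once obliviousness lets us glue the two pieces at a common node), so $P$ accepts $\alpha_a\beta_b$, which has $a+(k-b)\neq k$ ones -- contradicting correctness. This part is the NOBDD analogue of the strong $1$-fooling-set bound and by itself only yields width $\ge k+1$. To gain the last node I would bring in an \emph{overweight} prefix: consider $\tau=1^{k+1}0^{n-k-1}$, which is rejected since $\#_1(\tau)=k+1>k$, but which (by completeness) still follows some full path, visiting a node $u^{*}\in V_{k+1}$ reached via the all-ones prefix $1^{k+1}$. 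If $u^{*}=u_j$ for some $j$, then $\mathrm{source}\xrightarrow{1^{k+1}}u^{*}=u_j\xrightarrow{\beta_j}\mathrm{Accept}$ is legal, so $P$ accepts an input with $(k+1)+(k-j)>k$ ones -- again a contradiction. Hence $u^{*}\notin\{u_0,\dots,u_k\}$, so $|V_{k+1}|\ge k+2$ and $\mathrm{width}(P)\ge k+2$; combined with the symmetry reduction this gives $\mathrm{width}(P)\ge\min(k+1,n-k+1)+1$.

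The step that needs care, and the main obstacle, is exactly the last one: the ``$+1$'' does not come from a fooling set (strong $1$-fooling sets for $\exactk$ have size only $\min(k+1,n-k+1)$), but from a ``dead'' node reachable by an overweight prefix, which is why the completeness convention is essential -- without it one would only recover $\min(k+1,n-k+1)$. The value $k=n/2$ is a genuine boundary case for this technique (the level needed for the splicing sub-claim and the level admitting an overweight prefix no longer overlap), but it is irrelevant to the intended application, where $k$ is far from $n/2$.
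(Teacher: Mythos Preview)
Your argument is essentially the paper's: both exhibit $k{+}1$ distinguishable nodes at level $k{+}1$ via a fooling/splicing argument on prefixes carrying $0,1,\ldots,k$ ones, and then force a $(k{+}2)$-th node from the overweight prefix $1^{k+1}$ (the paper phrases this last step as ``reading a $1$ from $v_k$ must lead to a node other than $v'_0,\ldots,v'_k$,'' which is precisely your $u^{*}$). The only differences are cosmetic---you work directly at level $k{+}1$ while the paper first argues at level $k$ and pushes the $k{+}1$ nodes forward by a $0$-step---and you are actually more explicit than the paper about the completeness convention and about the $k=n/2$ boundary (which the paper's proof, as written, also does not handle cleanly).
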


\begin{proof}
Let $d=\min(k, n-k)$. Assume $k \leq n/2$ that is $d=k$. Let $P_n$ be an NOBDD that computes $\exactk$ and has width $<d+2$. Consider the $k$-th level $V_k$ of $P_n$ and a set of partial inputs $\Sigma=\{\sigma^j \in \{0,1\}^k : \sigma^j=$ $\underbrace{0\cdots 0}_{k-j}$ $\underbrace{1\cdots 1}_{j}, j=0, \ldots, k \}$. Let $ path(\sigma^j) $ be one of the paths after reading $ \sigma^j $ that can also lead the computation to an accepting node after reading $(k-j)$ more $1$s. Due to the Pigeonhole principle, each $ path(\sigma^j) $ must be in a different node of the $k$-th level and so $ V_k $ contains at least $ k+1 $ different nodes, say $ v_0,\ldots,v_k $.

The level $ V_{k+1} $ contains $ k+1 $ different nodes, say $ v'_0,\ldots,v_k' $, that can be accessed from $v_0,\ldots,v_k$ by reading a single 0, because from these nodes the computation can still go to some accepting nodes.  If a single 1 is read, then $ v_k $ must switched to a node other than $ v'_0,\ldots,v_k' $. If it switches to $ v_j' $, then the non-member input $ 1^k11^{k-j}0^* $ with length $ n $ is accepted since the computation from $ v_j' $ can go to an accepting node (the input $ 0^{k-j}1^j01^{k-j}0^* $ with length $ n $ is a member). Therefore, there must be at least $ (k+2) $ nodes.

If $k > n/2$, then  $n-k \leq n/2$ and so we can use the same proof by interchanging 0s and 1s. 
\qed\end{proof}

\begin{corollary}\label{and-classical-quantum}
The function $\andn$ is computed by an NOBDD $P_n$ with width 2.
The function $\andn$ is computed by an \nuobdd~ $N_n$ with width $n+1$ and there is no \nuobdd~ computing $\andn$ with width less than $n+1$.  

\end{corollary}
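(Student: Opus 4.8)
The plan is to derive all three assertions from results already in hand, after the observation that $\andn$ is exactly $\exactk$ for the extreme value $k=n$: the only $\sigma$ with $\#_1(\sigma)=n$ is $1^n$.

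For the first assertion I would give the obvious deterministic OBDD of width $2$ (which is a fortiori an NOBDD): read $x_1,\dots,x_n$ in order, maintain one accepting ``alive'' state and one rejecting ``dead'' state, stay alive on a $1$ and pass irreversibly to the dead state on a $0$. Only $1^n$ ends in the alive state, so this computes $\andn$; alternatively one may simply set $k=n$ in Theorem \ref{exact-classical-upper}, whose width bound $\min(k+1,n-k+1)+1$ collapses to $2$.

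The unitary bounds are the instantiations of Theorem \ref{quantum-upper-bound-exactk} and Theorem \ref{quantum-lower-bound-exactk} at $k=n$. The former produces a \uobdd~ (hence an \nuobdd) of width $\max\{n+1,\,1\}=n+1$ computing $\exactk=\andn$ exactly, and the latter shows that no \nuobdd~ for this function can have width below $\max\{n+1,\,1\}=n+1$; together they give the claimed optimal width $n+1$.

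The only point I would stress is \emph{why} the generic lower bound of Theorem \ref{fooling_set} is useless here: $\andn^{-1}(1)=\{1^n\}$ is a singleton, so every strong $1$-fooling set has size at most $1$ and that bound is trivial. This is precisely the situation the dedicated argument of Theorem \ref{quantum-lower-bound-exactk} was designed for --- it propagates a maximal linearly independent subset $\Phi_l$ of the reachable states up the levels, enlarging it at each step via Lemma \ref{lem2} with the distinguisher $1^{n-l}$ (the empty string when $l=n$) --- and that argument is the sole nontrivial ingredient, so no additional work is needed and the claim is stated as a corollary rather than a theorem.
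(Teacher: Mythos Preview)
Your proposal is correct and matches the paper's approach exactly: the paper states this as a corollary with no proof, and the intended derivation is precisely the substitution $k=n$ into Theorems~\ref{quantum-upper-bound-exactk}, \ref{quantum-lower-bound-exactk}, and \ref{exact-classical-upper}, which you carry out correctly. Your additional remarks on why the fooling-set bound of Theorem~\ref{fooling_set} is trivial here, and on how the distinguisher degenerates to the empty string at the top level, are accurate but go beyond what the paper records.
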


Now we show that negation of the function $\exactk$ is \textit{cheap} for \nuobdd. The Boolean function $ \notexactk: \{0,1\}^{n}\rightarrow \{0,1\}$ is defined as
	\[ \notexactk(\sigma) = \left\lbrace 
	\begin{array}{lll}
		0 & , & if \,\#_1(\sigma)= k, \\
		1 & , & \mbox{otherwise. }\\
    \end{array}	 \right. \]

\begin{theorem}\label{quantum-cheap}
	For any positive integer $k$ ($k\leq n$) the function $ \notexactk $ can be computed by an \nuobdd~ $ N_n $ with width 2.
\end{theorem}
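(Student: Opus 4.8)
The plan is to imitate the $\mathtt{notPERM_n}$ construction of Theorem~\ref{quantum_notPerm}: use a single real qubit (width~2) and encode the quantity $\#_1(\sigma)$ as a rotation angle on the unit circle, arranging things so that the target value $\#_1(\sigma)=k$ lands exactly on the non-accepting axis $\ket{q_1}$, while every other value of $\#_1(\sigma)$ gives a nonzero amplitude on the accepting axis $\ket{q_2}$. Since $\#_1(\sigma)$ ranges over $\{0,1,\dots,n\}$, there are only $n+1$ possible ``final angles'', so we must pick the unit rotation angle $\alpha$ small enough that all $n+1$ multiples stay within a half-turn; concretely, take $\alpha=\frac{\pi}{n+1}$ (or any value with $0<\alpha<\frac{\pi}{n}$, say).

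First I would set $Q=\{q_1,q_2\}$ with $q_2$ the only accepting state, operating over $\mathbb{R}^2$. The initial state is the point on the unit circle at angle $-k\alpha$ measured from $\ket{q_1}$, i.e.
\[
	\ket{\psi^0} = \cos(-k\alpha)\ket{q_1} + \sin(-k\alpha)\ket{q_2}.
\]
For each step, reading a $1$ applies the counterclockwise rotation by angle $\alpha$, and reading a $0$ applies the identity. After the whole input is read, the accumulated rotation is $\#_1(\sigma)\,\alpha$, so the final state is the point at angle $(\#_1(\sigma)-k)\alpha$ from $\ket{q_1}$.

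Next I would verify correctness. If $\#_1(\sigma)=k$, the final state is exactly $\ket{q_1}$, so $Pr_{accept}^{N_n}(\sigma)=\sin^2(0)=0$, which is what $\notexactk$ requires on its $0$-inputs. If $\#_1(\sigma)\ne k$, then $(\#_1(\sigma)-k)\alpha$ is a nonzero multiple of $\alpha$ lying strictly between $-k\alpha$ and $(n-k)\alpha$, hence strictly inside $(-\pi,\pi)$ and not equal to $0$; therefore $\sin\big((\#_1(\sigma)-k)\alpha\big)\ne 0$ and the input is accepted with nonzero probability, as required. This establishes that $N_n$ computes $\notexactk$ nondeterministically with width~2.

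The only real thing to be careful about is the range argument: one must confirm that the total rotation never wraps around past $\ket{q_1}$ in a way that would accidentally land a non-target count back on the non-accepting axis, and that the amplitude of $\ket{q_2}$ cannot vanish for any count other than $k$. Both follow immediately once $\alpha$ is chosen with $(n+1)\alpha \le \pi$, so there is no genuine obstacle here — the argument is essentially the $\mathtt{notPERM_n}$ argument with the simpler ``counter'' quantity $\#_1(\sigma)$ in place of $T(A)$, and with the initial offset chosen to be $k\alpha$ instead of $T_{perm}\alpha$.
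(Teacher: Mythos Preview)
Your proof is correct and follows essentially the same construction as the paper's: a two-state real rotation with initial offset $-k\alpha$ and a rotation by $\alpha$ per input $1$, so that the final amplitude on $\ket{q_2}$ is $\sin\big((\#_1(\sigma)-k)\alpha\big)$. The only difference is the choice of step angle: the paper takes $\alpha=\pi/n$, whereas you take $\alpha=\pi/(n+1)$ (and explicitly check the range argument $(n+1)\alpha\le\pi$), which is in fact slightly more careful at the boundary case $k=n$, $\#_1(\sigma)=0$.
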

\begin{proof}
	We use the same idea given in the proof of Theorem \ref{quantum_notPerm}. Let $ \alpha = \frac{\pi}{n} $. The \nuobdd~ $ N_n $ has two states $ \{q_1,q_2\} $, $ q_2 $ is the accepting state, and the initial quantum state is
	\[
		\cos(-k\alpha)\ket{q_1} + \sin(-k\alpha_1) \ket{q_2}.
	\] 
	After reading the input $\sigma$, $ N_n $ makes the counter clockwise rotation with angle $ k' \alpha  $, where $ k' = \#_1(\sigma) $, i.e. it rotates with angle $ \alpha $ for each 1. If $ k = k' $, then the final state is $ \ket{q_1} $ and so the input is accepted with zero probability. Otherwise, the accepting probability is always nonzero. Note that $ N_n $ can make a rotation with angle at most $ \pi $.
\qed\end{proof}


\subsection{Function MOD}
\label{Sec:MOD}

Here we present a series of results based on Boolean function $ \modpn : \{0,1\}^n \rightarrow \{0,1\}$, which is defined as:
\[ 
	\mathtt{MOD^p_{n}}(\sigma) = \left\lbrace \begin{array}{lll}
		1 & , & \mbox{if } \#_1(\sigma) \equiv 0 \pmod{p}, \\
		0 & , & \mbox{otherwise, }
	\end{array}	 \right.
\]
where $\#_1(\sigma)$ is the number of 1s in the input $\sigma$.

It is clear that $ \modpn $ can be solved by reversible OBDDs and so by exact \uobdd s with width $ p $.

\begin{theorem}
	\label{Upper_MOD_p} 
	There is a width-$p$ ROBDD $R_n$ computing the function $\mathtt{MOD^p_{n}}$.
\end{theorem}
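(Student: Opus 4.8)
The plan is to exhibit an explicit reversible OBDD and check that it meets the definition. Take $R_n$ to have state set $Q=\{q_0,q_1,\dots,q_{p-1}\}$ operating on $\mathbb{R}^p$ (or equivalently the $p$ permutation states of a classical reversible OBDD), with initial state $q_0$, the single accepting state $q_0$, and the natural variable order $\pi=(1,2,\dots,n)$. The transition at every level is the same: on reading $0$ apply the identity permutation $U(0)=I$, and on reading $1$ apply the cyclic shift $U(1):\ket{q_j}\mapsto\ket{q_{(j+1)\bmod p}}$.

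First I would observe that $R_n$ is indeed a ROBDD: both $U(0)=I$ and the cyclic shift $U(1)$ are $p\times p$ permutation (equivalently, $0$--$1$ unitary) matrices, so every computation path is deterministic and reversible, and the width is exactly $p$. Next I would track the state after reading a prefix: an easy induction on the number of processed bits shows that after reading $\sigma_1\cdots\sigma_j$ the current state is $q_{(\#_1(\sigma_1\cdots\sigma_j))\bmod p}$, since each $1$ advances the index by one modulo $p$ and each $0$ leaves it unchanged. Hence the final state on input $\sigma$ is $q_{\#_1(\sigma)\bmod p}$.

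Finally I would read off correctness: the accepting state $q_0$ is reached with probability $1$ precisely when $\#_1(\sigma)\equiv 0\pmod p$, i.e. exactly on $\mathtt{MOD^p_n}^{-1}(1)$, and all other inputs end in some $q_j$ with $j\neq 0$ and are accepted with probability $0$. Thus $R_n$ computes $\mathtt{MOD^p_n}$ (in fact exactly), and it has width $p$.

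There is essentially no hard step here; the only point that merits an explicit sentence is the reversibility check, namely that the increment-by-one-mod-$p$ map is a permutation of the state set (which it is, being a single $p$-cycle), so that $R_n$ qualifies as a genuine ROBDD rather than merely a deterministic OBDD. The bound $n+1$ versus $p$ is not relevant since we are free to reuse states across levels.
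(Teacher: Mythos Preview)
Your proposal is correct and follows essentially the same approach as the paper: define $p$ states, start in $q_0$, apply the identity on $0$ and the cyclic shift on $1$, and accept iff the final state is $q_0$. Your write-up is in fact slightly more explicit than the paper's, spelling out the induction on prefixes and the reversibility check, but the construction and argument are identical.
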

 \proof  $R_n$ has $p$ states $s_0,\dots, s_{p-1}$ and $s_0$ is the initial state. $R_n$  deterministically counts number of 1s in the input by modulo $p$. If the input's bit is 1, $R_n$ goes from the state $s_i$ to the state $s_{i+1 \pmod p}$ and applies the identity transformation, otherwise.  $R_n$  accepts the input  iff the final state is $s_0$. It is clear that  transitions of $R_n$ are reversible and the width of $R_n$ is $p$.
\Endproof

Now, we show that nondeterminism does not help neither classically nor quantumly in order to solve $ \tt MOD_n^p $.
\begin{theorem}
	\label{MODp-upper} 
	If $p\leq n/2$, then the width of any NOBDD computing $\mathtt{MOD^p_{n}}$ cannot be less than $p$. 
\end{theorem}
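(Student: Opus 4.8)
The plan is to prove a fooling-set style lower bound directly: for every variable order $\pi$ I will exhibit $p$ partial inputs read along the first $p$ positions that must end up in pairwise distinct nodes of level $p$ of any NOBDD for $\modpn$. The tool is the standard path-splicing argument for nondeterministic OBDDs, which is the classical analogue of the reasoning in Theorem~\ref{fooling_set}: if $\sigma^i\gamma^i$ is accepted for each $i$ but every ``crossed'' input $\sigma^i\gamma^j$ with $i\neq j$ is rejected, then the level-$p$ nodes reached along fixed accepting paths for the $\sigma^i\gamma^i$ must all differ.

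Fix $\pi=(i_1,\dots,i_n)$ and work with level $l=p$; since $p\le n/2$, there remain $n-l\ge n/2\ge p$ unread variables after this level, which is what makes the construction possible. For $j=0,\dots,p-1$ let $\sigma^j\in\{0,1\}^{p}$ be the assignment to $x_{i_1},\dots,x_{i_p}$ with exactly $j$ ones, and let $\gamma^j\in\{0,1\}^{n-p}$ be any assignment to the remaining variables with exactly $(p-j)\bmod p$ ones; this is admissible because $(p-j)\bmod p\le p-1\le n-p$. Then $\sigma^j\gamma^j$ has $j+((p-j)\bmod p)\equiv 0\pmod p$ ones, so $\modpn(\sigma^j\gamma^j)=1$. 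For $i\neq j$ the input $\sigma^i\gamma^j$ has $i+((p-j)\bmod p)\equiv i-j\pmod p$ ones, and since $0\le i,j\le p-1$ with $i\neq j$ we have $i-j\not\equiv 0\pmod p$; hence $\modpn(\sigma^i\gamma^j)=0$, and symmetrically $\modpn(\sigma^j\gamma^i)=0$. So $\{(\sigma^0,\gamma^0),\dots,(\sigma^{p-1},\gamma^{p-1})\}$ is a strong $1$-fooling set of size $p$.

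Now the splicing step. Let $P_n$ be an NOBDD computing $\modpn$ in the order $\pi$. For each $j$, fix an accepting computation path of $P_n$ on $\sigma^j\gamma^j$ and let $v_j$ be the node it visits at level $p$. If $v_i=v_j$ for some $i\neq j$, concatenate the prefix of the $i$-th path (which reads $\sigma^i$ and reaches $v_i$) with the suffix of the $j$-th path (which reads $\gamma^j$ starting from $v_j=v_i$ and ends in an accepting node); this is a legal computation path of $P_n$ on $\sigma^i\gamma^j$ ending in an accepting node, contradicting $\modpn(\sigma^i\gamma^j)=0$. Hence $v_0,\dots,v_{p-1}$ are pairwise distinct, level $p$ of $P_n$ has at least $p$ nodes, and $width(P_n)\ge p$.

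The only real care needed is the bookkeeping around lengths: checking that the completions $\gamma^j$ have an admissible number of ones is exactly where the hypothesis $p\le n/2$ enters, and one should verify the boundary case $j=0$ (where $\gamma^0$ is all zeros) separately. The path-splicing argument itself is routine and carries over verbatim to the nondeterministic setting, so I do not expect it to be an obstacle.
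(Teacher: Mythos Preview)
Your argument is correct and is essentially the same path-splicing/pigeonhole argument the paper uses; the only cosmetic difference is that you split the input at level $p$ (prefixes with $0,\dots,p-1$ ones), whereas the paper splits at level $n-p+1$ (prefixes with $1,\dots,p$ ones), and you phrase it explicitly via the strong $1$-fooling set framework while the paper argues the collision directly. Both choices work for the same reason and both need $p\le n/2$ at exactly the same point, so there is no substantive divergence.
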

\label{NOBDD_MOD}
\proof Assume that there exists an NOBDD $P_n$ that computes $\mathtt{MOD^p_{n}}$ and has  width $q < p$. Let $\Sigma=\{\sigma^1, \dots, \sigma^{p}:\sigma^j\in \{0,1\}^{n-p+1},\sigma^j=$ $\underbrace{0 \cdots 0}_{n-p+1-j}$ $\underbrace{1\cdots 1}_{j}\}$. Let $ path(\sigma^j) $ be one of the path after reading $ \sigma^j $ which also leads the computation to an accepting node later.  Since $Width(P_n)<p$,  by the Pigeonhole principle, there exist $ path(\sigma^i) $ and $path(\sigma^j)$, corresponding to $\sigma^i $ and $ \sigma^j $ respectively, that have the same node at the $(n-p+1)$-th level. It is clear that from this node the computation ends in an accepting node after reading $(p-i)$ $1$s. More specifically, the inputs $ \sigma^i \underbrace{0 \cdots 0}_{i-1} \underbrace{1 \cdots 1}_{p-i} $ and $ \sigma^j \underbrace{0 \cdots 0}_{i-1}\underbrace{1 \cdots 1}_{p-i} $ are accepted by $ P_n $. Since the second string must be rejected by $ P_n $, it is a contradiction. 
\Endproof

\begin{theorem}
	\label{Lower-Mod_p}
	For any $p$ ($p\leq n$)  the width of any \nuobdd~ computing $\mathtt{MOD^p_{n}}$ cannot be less than $p$.
\end{theorem}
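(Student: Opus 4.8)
The plan is to mirror the level-by-level linear-independence argument of Theorem~\ref{quantum-lower-bound-exactk}, the only real change being the choice of completion string. Let me first note why the obvious route fails: the strong $1$-fooling set bound of Theorem~\ref{fooling_set} is not enough here, because for $\modpn$ any fooling set $S^\pi_k$ forces the $\#_1$-values of its prefix assignments (and of its suffix assignments) to be pairwise distinct, so $|S^\pi_k|\le\min(k+1,n-k+1)$, and this only yields width $\ge p$ for $p$ up to about $n/2$. To cover the full range $p\le n$ one has to track a growing set of linearly independent state vectors while building the prefix up bit by bit, exploiting that the newly adjoined prefix has to be separated only from prefixes of \emph{small} weight.

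Concretely, assume $p\ge 2$ (the case $p=1$ is trivial), let $N_n$ compute $\modpn$ in reading order $\pi$, and, as $\modpn$ is symmetric, take $\pi=(1,\dots,n)$. For $\sigma\in\{0,1\}^l$ write $\ket{\psi^l(\sigma)}=U_l(\sigma_l)\cdots U_1(\sigma_1)\ket{\psi^0}$, put $\Psi_l=\{\ket{\psi^l(\sigma)}:\sigma\in\{0,1\}^l\}$, and let $\Phi_l\subseteq\Psi_l$ be a maximal linearly independent subset. I would prove by induction on $l$, for $0\le l\le p-1$, that $\Phi_l$ may be taken to contain the $l+1$ states reached by the prefixes $1^a0^{l-a}$ ($0\le a\le l$) and that these are linearly independent; instantiating $l=p-1$ then gives $Width(N_n)\ge p$. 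The base cases $l=0$ and $l=1$ are immediate: $\ket{\psi^1(0)}$ and $\ket{\psi^1(1)}$ are independent since the completion $1^{p-1}0^{n-p}$ turns prefix $1$ into a member ($p\equiv 0$) and prefix $0$ into a non-member ($p-1\not\equiv 0$ as $p\ge 2$), so Lemma~\ref{lem1} applies.

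For the inductive step at level $l$ ($2\le l\le p-1$): by Lemma~\ref{cor1}, applying $U_l(0)$ keeps the $l$ states $\ket{\psi^l(\sigma^a0)}$ linearly independent, where $\sigma^a=1^a0^{l-1-a}$ are the level-$(l-1)$ prefixes; the key move is to adjoin $\ket{\psi^l(1^l)}$ via Lemma~\ref{lem1} (exactly as in Lemma~\ref{lem2}). For that I use the completion $\gamma=1^{\,p-l}0^{\,n-p}\in\{0,1\}^{n-l}$, which exists \emph{precisely because} $p\le n$ (so $0\le p-l\le n-l$). On input $1^l\gamma$ the weight is $l+(p-l)=p\equiv 0\pmod p$, a member; on input $\sigma^a0\gamma$ the weight is $\#_1(\sigma^a)+(p-l)\equiv\#_1(\sigma^a)-l\pmod p$ with $\#_1(\sigma^a)-l\in\{-1,\dots,-l\}$, hence nonzero mod $p$, a non-member. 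Thus $\gamma$ distinguishes $1^l$ from every $\sigma^a0$, Lemma~\ref{lem1} gives that $\{\ket{\psi^l(\sigma^a0)}\}_a\cup\{\ket{\psi^l(1^l)}\}$ is linearly independent, and the induction continues. The only genuinely load-bearing point — and the only place the hypothesis $p\le n$ is used — is the existence of this completion $\gamma$; everything else is bookkeeping identical to Theorem~\ref{quantum-lower-bound-exactk}, so I do not anticipate a real obstacle.
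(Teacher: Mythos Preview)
Your proposal is correct, and its core argument — the level-by-level linear-independence buildup at levels $0,\dots,p-1$ with completion $\gamma=1^{p-l}0^{n-p}$ — coincides exactly with what the paper does for the case $p>n/2$ (the paper even singles out the same prefixes $\sigma^j=1^j0^{p-1-j}$ at level $p-1$). The only organizational difference is that the paper splits into two cases: for $p\le n/2$ it invokes Theorem~\ref{fooling_set} directly via the strong $1$-fooling set
\[
S^\pi_{n-p+1}=\bigl\{(\,0^{\,n-p+1-i}1^{\,i},\,0^{\,i-1}1^{\,p-i}\,):i=0,\dots,p-1\bigr\},
\]
and only for $p>n/2$ — precisely the range where, as you correctly observe, no level supports a size-$p$ fooling set — does it fall back on the inductive argument. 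Your uniform treatment is marginally cleaner; the paper's version illustrates the fooling-set tool where it suffices and reserves the induction for the range where it is genuinely needed.
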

\proof 
Let $p\leq n/2$. 
For any order $\pi$ of reading variables we can construct the following strong 1-fooling set for the function $\mathtt{MOD^p_{n}}$: 
$$S^{\pi}_{n-p+1}=\{(\sigma^i,\gamma^i): i=0, \ldots, p-1,\sigma^i=\underbrace{0\cdots 0}_{n-p+1-i} \underbrace{1\cdots 1}_{i},\gamma^i=\underbrace{0\cdots 0}_{i-1} \underbrace{1\cdots 1}_{p-i} \}.$$ Due to Theorem \ref{fooling_set}, we follow the result.

Let consider the case  $p>n/2$. Using  the same arguments as in the proof of Theorem \ref{quantum-lower-bound-exactk} we can show that on  the  $(p-1)$-th level the set of linear independent vectors, which are  achievable quantum states, contains at least $p$ elements. They are $\ket{\psi(\sigma^0)},\dots, \ket{\psi(\sigma^{p-1})}$, where $\sigma^j=1^{j}0^{p-j-1}, j=0, \dots, p-1. $ 
\Endproof

Currently we do not know whether using more general QOBDD models can narrow the width for $ \modpn $.


\subsection{Hierarchy for NUOBDDs}\label{Sec:hierarhy}

In  \cite{AGKY14,AGKY16A}, the following width hierarchy for OBDDs and NOBDDs  was presented. For any integer $n>3$ and $1< d \leq \frac{n}{2}$, we have
\[
	\mathsf{OBDD_n^{d-1}}  \subsetneq  \cobddnd
	~~ \mbox{ and }~~
	\mathsf{NOBDD_n^{d-1}} \subsetneq \cnobddnd.
\]
For any integer $n$, $d=d(n)$, $16\leq d \leq 2^{n/4}$, we have
\[
\mathsf{OBDD^{\lfloor d/8 \rfloor-1}}  \subsetneq  \mathsf{OBDD^{d}} \mbox{ and }\\
\mathsf{NOBDD^{\lfloor d/8 \rfloor-1}}  \subsetneq  \mathsf{NOBDD^{d}}.
\]
Here we obtain a complete hierarchy result for \nuobdd s with width up to $n$.

\begin{theorem}
For any integer $n>1$ and $1< d \leq n$, we have
	\[
		\mathsf{NUOBDD_n^{d-1}} \subsetneq \cnuobddnd.
	\]
\end{theorem}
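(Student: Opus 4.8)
The plan is to exhibit, for each width $d$ with $1< d\le n$, a Boolean function on $\{0,1\}^n$ that is computable by a \nuobdd{} of width exactly $d$ but by no \nuobdd{} of width $d-1$. The natural candidate is a function whose strong 1-fooling set has size exactly $d$ for every reading order, since Theorem~\ref{fooling_set} then forces width $\ge d$, while a matching upper bound gives width $d$. The functions $\exactk$ and $\modpn$ studied above already give such separations at the specific values $d=k+1$ (via Theorems~\ref{quantum-upper-bound-exactk} and~\ref{quantum-lower-bound-exactk}, noting that for $k\le n/2$ the width is $\max\{k+1,n-k+1\}=n-k+1$, so one should instead use the symmetric regime) and $d=p$ (via Theorems~\ref{Upper_MOD_p} and~\ref{Lower-Mod_p}). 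So the first step is to select, for each target width $d$, the cleanest of these: I would use $\modpn$ with $p=d$, which by Theorem~\ref{Upper_MOD_p} is computed by an ROBDD — hence an exact, hence nondeterministic, \uobdd{} — of width exactly $d$, and by Theorem~\ref{Lower-Mod_p} requires \nuobdd{} width at least $d=p$ for every $p\le n$. This already yields $\modpn[d]\in\cnuobddnd\setminus\mathsf{NUOBDD_n^{d-1}}$, giving $\mathsf{NUOBDD_n^{d-1}}\subsetneq\cnuobddnd$ directly, provided $d\le n$.

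The second step is to confirm the inclusion $\mathsf{NUOBDD_n^{d-1}}\subseteq\cnuobddnd$ is genuine (i.e.\ the containment is not accidentally an equality). This is immediate: any \nuobdd{} of width $d-1$ operating on $\mathcal H^{d-1}$ can be viewed as operating on $\mathcal H^d$ by embedding $\mathcal H^{d-1}$ as the span of the first $d-1$ basis states and extending every $U_j(0),U_j(1)$ to act as the identity on $\ket{q_d}$; the initial state and accepting-set definitions carry over unchanged, and the accepting probabilities are identical. So $\mathsf{NUOBDD_n^{d-1}}\subseteq\cnuobddnd$, and combined with the witness function from the first step the inclusion is strict.

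A subtlety to handle carefully is the range of $d$: Theorem~\ref{Lower-Mod_p} is stated for all $p\le n$, but its proof splits into the case $p\le n/2$ (fooling-set argument) and $p>n/2$ (direct linear-independence argument at level $p-1$), so I should simply cite the theorem as a black box rather than re-deriving it. One must also check the boundary $d=n$: $\modpn[n]$ is 1 exactly when $\#_1(\sigma)\in\{0,n\}$, and $\mathsf{NUOBDD_n^{n}}$ is the whole function space up to this width, so the hierarchy reaches all the way to $n$ as claimed. The only genuinely delicate point — and the main thing to get right — is making sure the chosen witness really needs width $> d-1$ and not merely $\ge$ something smaller; but since Theorem~\ref{Lower-Mod_p} gives the lower bound $p$ exactly and Theorem~\ref{Upper_MOD_p} gives the matching upper bound $p$, there is no gap, and the argument closes. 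I would therefore present the proof in three short moves: (1) $\modpn[d]\in\cnuobddnd$ by Theorem~\ref{Upper_MOD_p}; (2) $\modpn[d]\notin\mathsf{NUOBDD_n^{d-1}}$ by Theorem~\ref{Lower-Mod_p}; (3) the trivial embedding shows $\mathsf{NUOBDD_n^{d-1}}\subseteq\cnuobddnd$, so the inclusion is proper.
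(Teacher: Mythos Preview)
Your argument is correct. The inclusion $\mathsf{NUOBDD_n^{d-1}}\subseteq\cnuobddnd$ is trivial, and for the separation you invoke Theorems~\ref{Upper_MOD_p} and~\ref{Lower-Mod_p} with $p=d$ to get $\mathtt{MOD^d_n}\in\cnuobddnd\setminus\mathsf{NUOBDD_n^{d-1}}$ for every $1<d\le n$. Since Theorem~\ref{Lower-Mod_p} is stated for all $p\le n$, this single witness family suffices across the whole range.

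The paper takes a slightly different route: it splits into two regimes, using $\mathtt{MOD^d_n}$ only for $d\le n/2$ and switching to $\mathtt{EXACT_n^{d-1}}$ (via Theorems~\ref{quantum-upper-bound-exactk} and~\ref{quantum-lower-bound-exactk}) for $d>n/2$. Your approach is the more economical one, since Theorem~\ref{Lower-Mod_p} already handles both regimes internally; the paper's split is not mathematically necessary and appears to be an artifact of how the two lower-bound theorems were developed. The only thing your version loses is an explicit demonstration that the $\mathtt{EXACT}$ results also drive the hierarchy, which the paper presumably wanted to showcase. One cosmetic point: the macro \verb|\modpn| in this paper takes no argument, so writing \verb|\modpn[d]| will not typeset as intended; use $\mathtt{MOD^d_n}$ directly.
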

 
\proof It is obvious that $\mathsf{NUOBDD^{d-1}}\subseteq \mathsf{NUOBDD^{d}}$. 
If $ d \leq n/2 $, we know that $\mathtt{MOD^d_{n}}\in\mathsf{NUOBDD_n^{d}} $ and $\mathtt{MOD^d_{n}}\notin\mathsf{NUOBDD_n^{d-1}} $ due to Theorems \ref{Upper_MOD_p} and \ref{Lower-Mod_p}.
If $ d>n/2 $, we know that $\mathtt{EXACT_n^{d-1}}\in\mathsf{NUOBDD_n^{d}} $ and  $\mathtt{EXACT_n^{d-1}}\notin\mathsf{NUOBDD_n^{d-1}} $ due to Theorems \ref{quantum-upper-bound-exactk} and \ref{quantum-lower-bound-exactk}.
\Endproof

\begin{theorem}
	(1) For any pair $ (d_1,d_2) $ satisfying  $  1 < d_1,d_2 \leq n $, $ \mathsf{NOBDD_n^{d_2}} \not\subseteq \mathsf{NUOBDD_n^{d_1}}$. (2) For any   $ (d_1,d_2) $ satisfying $  1 < d_1, d_2 < \sqrt{n}-\frac{5}{4}\log n -1$, $ \mathsf{NUOBDD_n^{d_2}} \not \subseteq \mathsf{NOBDD_n^{d_1}} $.
\end{theorem}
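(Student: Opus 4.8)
The plan is to deduce both non-inclusions directly from the explicit functions and width bounds already established in Sections~\ref{Sec:notPerm} and~\ref{Sec:exactk}, together with the elementary remark that membership in $\mathsf{NOBDD_n^d}$ and in $\mathsf{NUOBDD_n^d}$ is monotone in $d$ (a width-$d$ OBDD is also a width-$d'$ OBDD for any $d'\ge d$, obtained by adjoining unused states). For part~(1) I would take $f=\andn$ (equivalently $\mathtt{EXACT}_n^n$). By Corollary~\ref{and-classical-quantum}, $f$ is computed by an NOBDD of width $2$, so by monotonicity $f\in\mathsf{NOBDD_n^2}\subseteq\mathsf{NOBDD_n^{d_2}}$ for every $d_2$ with $1<d_2\le n$. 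The same corollary asserts that no \nuobdd~ computes $\andn$ with width smaller than $n+1$, hence $f\notin\mathsf{NUOBDD_n^{d_1}}$ for every $d_1\le n$. Thus $f$ simultaneously witnesses $\mathsf{NOBDD_n^{d_2}}\not\subseteq\mathsf{NUOBDD_n^{d_1}}$ for all admissible pairs $(d_1,d_2)$.

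For part~(2) I would take $f=\mathtt{notPERM_n}$, assuming as in its definition that $n=m^2$ (for other $n$ one appends dummy variables, which perturbs only the $\log n$ term of the classical lower bound). By Theorem~\ref{quantum_notPerm}, $f$ has a width-$2$ \nuobdd, so $f\in\mathsf{NUOBDD_n^{d_2}}$ for every $d_2>1$. By Theorem~\ref{classical_notPerm}, every NOBDD computing $f$ has width at least $\sqrt{n}-\frac{5}{4}\log n-1$, hence $f\notin\mathsf{NOBDD_n^{d_1}}$ whenever $d_1<\sqrt{n}-\frac{5}{4}\log n-1$. This witnesses $\mathsf{NUOBDD_n^{d_2}}\not\subseteq\mathsf{NOBDD_n^{d_1}}$ throughout the stated range of $(d_1,d_2)$.

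I do not expect a genuine obstacle here: all the difficulty has already been absorbed into the upper and lower bounds of the cited results, and what remains is essentially their juxtaposition. The only points that need to be stated carefully are that the width-$2$ upper bounds propagate to the full ranges of $d_1$ and $d_2$ by the monotonicity observation, and the implicit restriction $n=m^2$ entering part~(2) through $\mathtt{notPERM_n}$, which is removable by padding without affecting the claimed bound.
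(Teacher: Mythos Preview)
Your proposal is correct and follows essentially the same approach as the paper: part~(1) is witnessed by $\andn$ via Corollary~\ref{and-classical-quantum}, and part~(2) by $\mathtt{notPERM_n}$ via Theorems~\ref{quantum_notPerm} and~\ref{classical_notPerm}, with monotonicity in $d$ used to pass from width~$2$ to the full range. Your added remark about the $n=m^2$ assumption and padding is a point the paper leaves implicit.
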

\begin{proof}
Let $d_1, d_2$ be arbitrary integers satisfying  $  1 < d_1,d_2 \leq n $. 
By Corollary \ref{and-classical-quantum}, we know that $\andn \in \mathsf{NOBDD_n^2} \subseteq \mathsf{NOBDD_n^{d_2}} $ and   $\andn \not\in \mathsf{NUOBDD_n^{n}} $ and so $ \andn \not\in \mathsf{NUOBDD_n^{d_1}} $. Therefore, $ \mathsf{NOBDD_n^{d_2}} \not\subseteq \mathsf{NUOBDD_n^{d_1}}$.

Let $d_1, d_2$ be arbitrary integers satisfying  $  1 < d_1,d_2 <\sqrt{n}-\frac{5}{4}\log n -1$.  
By Theorem \ref{quantum_notPerm} and Corollary \ref{classical_notPerm}, we know that $\mathtt{notPERM_n} \in \mathsf{NUOBDD_n^2} \subseteq \mathsf{NUOBDD_n^{d_1}} $ and   $\mathtt{notPERM_n} \not\in $ $\mathsf{NOBDD_n^{d_2}} .$ Therefore, $ \mathsf{NUOBDD_n^{d_2}} \not \subseteq \mathsf{NOBDD_n^{d_1}} $.
\qed\end{proof}

\subsection{Union, Intersection, and Complementation}\label{Sec:Union_Intersection_Complementation}

Let $f, g: \{0,1\}^n\rightarrow\{0,1\}$. We call a function $h=f\cup g$ the union of the functions $f$ and $g$ iff $h(\sigma)= f(\sigma)\bigvee g(\sigma)$ for all $\sigma\in \{0,1\}^n.$ We call a function $h=f\cap g$ the intersection of the functions $f$ and $g$ iff $h(\sigma)=f(\sigma)\bigwedge g(\sigma)$ for all $\sigma\in \{0,1\}^n.$ We call $h$ the negation of the function $f$ iff $h(\sigma)=\neg f(\sigma)$ for all $\sigma\in \{0,1\}^n.$

\begin{theorem}\label{Union}
	Let $ f $ and $ g $ are Boolean functions defined on $ \{0,1\}^n $ computed by an \nuobdd~ $N_n$ with width $ c $ and an \nuobdd~ $ N_n' $ with width $ d $ respectively such that $N_n$ and $N_n'$ use the same order $\pi$ of reading variables. Then, the Boolean function $ f \cup g $ can be computed by an \nuobdd, say $ N_n'' $, with width $ c + d $. 
\end{theorem}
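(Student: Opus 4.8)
The plan is to build $N_n''$ on the tensor product of the two state spaces, running $N_n$ and $N_n'$ in parallel coordinates. Let $N_n = (Q, \ket{\psi^0}, T, Q_{acc})$ with $|Q| = c$ and $N_n' = (Q', \ket{\psi'^0}, T', Q'_{acc})$ with $|Q'| = d$; since both read variables in the same order $\pi$, at step $j$ both test the same bit $x_{i_j}$. The naive idea of taking $\mathcal{H}^c \otimes \mathcal{H}^d$ and applying $U_j(\sigma_{i_j}) \otimes U_j'(\sigma_{i_j})$ would give width $c\cdot d$, which is too large. Instead, the intended construction uses a direct sum: let $N_n''$ operate on $\mathcal{H}^{c+d} = \mathcal{H}^c \oplus \mathcal{H}^d$, with initial state $\frac{1}{\sqrt 2}(\ket{\psi^0} \oplus \ket{\psi'^0})$ (a unit vector, since each summand has norm $1$ and they live in orthogonal subspaces), transformations $U_j''(\sigma_{i_j}) = U_j(\sigma_{i_j}) \oplus U_j'(\sigma_{i_j})$ (block-diagonal, hence unitary on $\mathcal{H}^{c+d}$), and accepting set $Q''_{acc} = Q_{acc} \cup Q'_{acc}$ (the accepting states of each block).

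First I would verify that $N_n''$ is a legitimate \uobdd: the state set has size $c+d$, the basis is the union of the two bases, $\ket{\psi^0 \oplus \psi'^0}/\sqrt 2$ is a unit vector, and each $U_j''$ is unitary because a block-diagonal matrix with unitary blocks is unitary. Next I would trace the computation: by the block-diagonal structure and an easy induction on $j$, after step $j$ the state of $N_n''$ on input prefix $\sigma$ is $\frac{1}{\sqrt 2}\bigl(\ket{\psi^j(\sigma)} \oplus \ket{\psi'^j(\sigma)}\bigr)$, where $\ket{\psi^j(\sigma)}$ and $\ket{\psi'^j(\sigma)}$ are the states of $N_n$ and $N_n'$ respectively. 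Then the final accepting probability is
\[
	Pr^{N_n''}_{accept}(\sigma) = \left\| P''_{acc} \tfrac{1}{\sqrt 2}\bigl(\ket{\psi^n(\sigma)} \oplus \ket{\psi'^n(\sigma)}\bigr) \right\|^2 = \tfrac12 \left( \| P_{acc}\ket{\psi^n(\sigma)}\|^2 + \| P'_{acc}\ket{\psi'^n(\sigma)}\|^2 \right),
\]
using that $P''_{acc}$ is block-diagonal and the two blocks are orthogonal. Hence $Pr^{N_n''}_{accept}(\sigma) = \tfrac12\bigl(Pr^{N_n}_{accept}(\sigma) + Pr^{N_n'}_{accept}(\sigma)\bigr)$.

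Finally I would conclude the nondeterministic acceptance condition. If $(f\cup g)(\sigma) = 1$, then $f(\sigma) = 1$ or $g(\sigma) = 1$, so at least one of the two summand probabilities is strictly positive, whence $Pr^{N_n''}_{accept}(\sigma) > 0$. If $(f\cup g)(\sigma) = 0$, then $f(\sigma) = g(\sigma) = 0$, so both summand probabilities are $0$ and thus $Pr^{N_n''}_{accept}(\sigma) = 0$. Therefore $N_n''$ computes $f \cup g$ with threshold $0$, i.e. nondeterministically, and its width is $c+d$. I do not expect a genuine obstacle here; the only point requiring a little care is confirming that the direct-sum (rather than tensor) construction is the right one — it is precisely what keeps the width additive rather than multiplicative — and checking that the interference between the two branches is harmless, which holds because the branches occupy orthogonal subspaces and are never recombined.
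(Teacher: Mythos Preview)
Your proof is correct and follows essentially the same construction as the paper: a direct-sum \nuobdd\ on $\mathcal{H}^{c+d}$ with initial state $\tfrac{1}{\sqrt{2}}(\ket{\psi^0}\oplus\ket{\psi'^0})$, block-diagonal unitaries, and accepting set $Q_{acc}\cup Q'_{acc}$. Your version is in fact slightly more explicit, computing the exact acceptance probability $\tfrac{1}{2}\bigl(Pr^{N_n}_{accept}(\sigma)+Pr^{N_n'}_{accept}(\sigma)\bigr)$ rather than just noting the qualitative zero/nonzero behavior.
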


\begin{proof}
Let $ N_n = ( Q = \{ q_1,\ldots,q_c \}$, $\ket{\psi^0},$ $T,$ $Q_{acc} ) $, 	$ N_n' =$ $( Q' =$ $\{ q'_1,\ldots,$ $q'_d \},$ $\ket{\psi'^0},$ $T',$ $Q'_{acc} ) $,  where 
$T=\{(i_j,$ $U_j(0),$ $U_j(1))\}_{j=1}^n$,   $T'=$ $\{(i_j,$ $U'_j(0),$ $U'_j(1))\}_{j=1}^n$. 
The  \nuobdd~ $ N''_n $ can be constructed based on $N_n$ and $N_n'$ as follows.

$ N''_n = ( Q''=Q\cup Q' = \{q_1,\ldots,q_c, q'_1,\ldots,$ $q'_d\},\ket{\psi''^0}, T'', Q''_{acc}=Q_{acc}\cup Q'_{acc} ),$ where the initial quantum state is $\ket{\psi''^0}= \frac{1}{\sqrt{2}}( \ket{\psi^0}\oplus\ket{\psi'^0})$. The sequence of instructions $T''=\{i_j, U''_j(0), U''_j(1)\}_{j=1}^n$, where
$U''_j(\sigma)=
\begin{pmatrix}
U_j(\sigma)& \mathbf{0}\\
\mathbf{0} & U'_j(\sigma)
\end{pmatrix}
$. Here $\mathbf{0}$ denotes zero matrix. 

By  construction, $ N_n'' $ executes both $ N_n $ and $ N_n'' $ in parallel with equal amplitude, and so it accepts a given input with zero probability iff both $N_n$ and $N_n'$ accept it with zero probability. In other words, it accepts an input with non-zero probability iff $ N_n $ or $N_n'$ accepts it with zero probability. Thus, $N_n''$ computes the function $f \cup h$.
\qed\end{proof}

\begin{theorem}
	\label{Intersection}
	Let $ f $ and $ g $ are Boolean functions defined on $ \{0,1\}^n $ computed by an \nuobdd~$N_n$ with width $ c $ and an \nuobdd~ $ N_n' $ with width $ d $, respectively, such that $N_n$ and $N_n'$ use the same order $\pi$ of reading variables. Then, the Boolean function $ f \cap g $ can be computed by an \nuobdd, say $ N_n'' $, with width $ c \cdot d $. 
\end{theorem}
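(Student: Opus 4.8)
The plan is to mimic the construction used for union in Theorem \ref{Union}, but replace the direct sum with a tensor product. Given $N_n = (Q = \{q_1,\ldots,q_c\}, \ket{\psi^0}, T, Q_{acc})$ and $N_n' = (Q' = \{q'_1,\ldots,q'_d\}, \ket{\psi'^0}, T', Q'_{acc})$ using the same order $\pi$, I would build $N_n''$ operating on $\mathcal{H}^c \otimes \mathcal{H}^d$ (which has dimension $c\cdot d$), with initial state $\ket{\psi''^0} = \ket{\psi^0} \otimes \ket{\psi'^0}$, with $j$-th step instructions $U''_j(\sigma) = U_j(\sigma) \otimes U'_j(\sigma)$ (still unitary, since the tensor product of unitaries is unitary), and with accepting set $Q''_{acc} = Q_{acc} \times Q'_{acc}$, i.e. the projector $P''_{acc} = P_{acc} \otimes P'_{acc}$.

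The key step is then the observation that after reading the whole input $\sigma$, the final state of $N_n''$ is $\ket{\psi''} = \ket{\psi} \otimes \ket{\psi'}$, where $\ket{\psi}$ and $\ket{\psi'}$ are the final states of $N_n$ and $N_n'$ respectively. First I would verify this by induction on the level $j$, using the mixed-product property $(A\otimes B)(v\otimes w) = (Av)\otimes(Bw)$. Then the accepting probability factorizes:
\[
	Pr^{N_n''}_{accept}(\sigma) = \| (P_{acc}\otimes P'_{acc})(\ket{\psi}\otimes\ket{\psi'}) \|^2 = \| P_{acc}\ket{\psi}\|^2 \cdot \| P'_{acc}\ket{\psi'}\|^2 = Pr^{N_n}_{accept}(\sigma)\cdot Pr^{N_n'}_{accept}(\sigma).
\]
Hence $Pr^{N_n''}_{accept}(\sigma) > 0$ if and only if both factors are positive, i.e. if and only if $f(\sigma) = 1$ and $g(\sigma) = 1$, which is exactly $f \cap g$. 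Since $N_n''$ has width $c\cdot d$, the theorem follows.

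I do not expect a serious obstacle here; the only things to be careful about are (i) confirming that $U_j(\sigma)\otimes U'_j(\sigma)$ is genuinely unitary and that $\ket{\psi''^0}$ has unit norm (both immediate from $\| v\otimes w\| = \|v\|\cdot\|w\|$), and (ii) making sure the read order is genuinely common to both machines so that at every level both factors query the same input bit $x_{i_j}$ — this is exactly why the hypothesis that $N_n$ and $N_n'$ share the order $\pi$ is needed. The multiplicativity of the norm under tensor products is the one small linear-algebra fact doing all the work, and it is the tensor-product analogue of the additive behaviour exploited in the union proof.
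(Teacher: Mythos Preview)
Your proposal is correct and follows essentially the same route as the paper: the paper also builds $N_n''$ as the tensor product of $N_n$ and $N_n'$ (state space $Q\times Q'$, initial state $\ket{\psi^0}\otimes\ket{\psi'^0}$, transitions $U_j(\sigma)\otimes U'_j(\sigma)$, accepting set $Q_{acc}\times Q'_{acc}$) and then uses the factorization $Pr^{N_n''}_{accept}(\sigma)=Pr^{N_n}_{accept}(\sigma)\cdot Pr^{N_n'}_{accept}(\sigma)$ to conclude. Your write-up is in fact slightly more explicit about the mixed-product property and the role of the shared order $\pi$, but the argument is the same.
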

\begin{proof}
Let $ N_n = ( Q = \{ q_1,\ldots,q_c \}$, $\ket{\psi^0},$ $T,$ $Q_{acc} ) $, 	$ N_n' =$ $( Q' =$ $\{ q'_1,\ldots,$ $q'_d \},$ $\ket{\psi'^0},$ $T',$ $Q'_{acc} ) $,  where 
$T=\{(i_j,$ $U_j(0),$ $U_j(1))\}_{j=1}^n$,   $T'=$ $\{(i_j,$ $U'_j(0),$ $U'_j(1))\}_{j=1}^n$. 
The  \nuobdd~ $ N_n'' $ can be constructed by tensoring $N_n$ and $N_n'$ as follows.

$ N_n'' = ( Q''=Q \times Q' = \{ q_{1,1},\ldots, q_{c,d} \},\ket{\psi^0}\otimes\ket{\psi'^0}, T'', Q'_{acc} ),$ where the sequence of instructions $T''=\{i_j, U_j(0)\otimes U'_j(0), U_j(1)\otimes U'_j(1)\}_{j=1}^n$ and the set of accepting states contains all the states $q_{i,j}$ satisfying $q_i\in Q_{acc}$ and $q_j\in Q'_{acc}.$

From this construction it follows that  $Pr^{N_n''}_{accept}(\sigma)=Pr^{N_n}_{accept}(\sigma)\cdot Pr^{N_n'}_{accept}(\sigma)$. If the input $\sigma$ is satisfying that $f(\sigma)=1$ and $g(\sigma)=1$, then both $N_n$ and $N_n'$ accept it with nonzero probability and therefore $N_n''$ also accepts this input with nonzero probability. If the input $\sigma$ is satisfying that $f(\sigma)=0$ or $g(\sigma)=0$ then $Pr^{N_n''}_{accept}(\sigma)=0$ for this input. 
\qed\end{proof}

The bound for intersection can be shown to be tight in certain cases. 

\begin{theorem}\label{MOD_pr}
	There exist functions $f$ and $g$ computed by {$\nuobdd$}s $N_{f,n}$ with width $c$ and $N_{g,n}$ with width $d$, respectively, such that the width of any  {$\nuobdd$}  computing the function $h=f \cap g$ cannot be less than $ lcm(c \cdot d)$, where $ lcm(c \cdot d) \leq n $.
\end{theorem}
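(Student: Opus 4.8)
The plan is to take $f$ and $g$ to be modular-counting functions. Given $c$ and $d$ with $lcm(c,d)\le n$, set $f=\mathtt{MOD^c_n}$ and $g=\mathtt{MOD^d_n}$ (instances of $\modpn$). First I would record the upper bounds on their widths: by Theorem~\ref{Upper_MOD_p} there is a width-$c$ ROBDD computing $f$ and a width-$d$ ROBDD computing $g$, and since every permutation matrix is unitary and such a diagram accepts every input with probability $1$ or $0$, each ROBDD is in particular a \nuobdd~(with threshold $0$) of the same width. Hence $f$ is computed by a \nuobdd~$N_{f,n}$ of width $c$ and $g$ by a \nuobdd~$N_{g,n}$ of width $d$; both read the variables in the natural order, which is what the construction of Theorem~\ref{Intersection} would require, although it plays no role in the lower bound we are after.

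The key step is the algebraic identification of the intersection. For every $\sigma\in\{0,1\}^n$ we have $h(\sigma)=f(\sigma)\wedge g(\sigma)=1$ iff $c\mid\#_1(\sigma)$ and $d\mid\#_1(\sigma)$, which is equivalent to $lcm(c,d)\mid\#_1(\sigma)$, i.e.\ to $\#_1(\sigma)\equiv 0\pmod{lcm(c,d)}$. Therefore $h=f\cap g=\mathtt{MOD^{lcm(c,d)}_n}$.

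It then remains to invoke the lower bound for $\modpn$. Since $lcm(c,d)\le n$ by hypothesis, Theorem~\ref{Lower-Mod_p} applies with $p=lcm(c,d)$ and yields that any \nuobdd~computing $\mathtt{MOD^{lcm(c,d)}_n}$ has width at least $lcm(c,d)$. Combining this with the identification of $h$, every \nuobdd~computing $h$ has width at least $lcm(c,d)$, which is the claim; moreover Theorem~\ref{Upper_MOD_p} (applied with $p=lcm(c,d)$) shows this bound is attained, so the intersection bound is tight for this pair $(f,g)$. I do not expect a genuine obstacle here, as everything reduces to Theorems~\ref{Upper_MOD_p} and~\ref{Lower-Mod_p}; the only points that need a little care are checking that the hypothesis $lcm(c,d)\le n$ is precisely what makes Theorem~\ref{Lower-Mod_p} applicable, and observing that this hypothesis automatically forces $c,d\le lcm(c,d)\le n$, so that the width-$c$ and width-$d$ constructions for $f$ and $g$ are themselves within range.
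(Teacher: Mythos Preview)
Your proposal is correct and follows essentially the same approach as the paper: pick $f=\mathtt{MOD^c_n}$ and $g=\mathtt{MOD^d_n}$, observe that $f\cap g=\mathtt{MOD^{lcm(c,d)}_n}$, and then appeal to Theorems~\ref{Upper_MOD_p} and~\ref{Lower-Mod_p}. Your write-up is in fact more careful than the paper's in making explicit why the hypothesis $lcm(c,d)\le n$ is needed for Theorem~\ref{Lower-Mod_p} to apply.
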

\begin{proof}
	By Theorems \ref{Upper_MOD_p} and \ref{Lower-Mod_p}, we can follow the result. The functions $ \mathtt{MOD_n^c} $ and $ \tt MOD_n^d $ are computed by $\nuobdd$s with widths $ c $ and $ d $, respectively. Their intersection function is $ \tt MOD_n^l $, where $ l = lcm(c,d) $, and so the width of any $\nuobdd$ cannot be less than $ l $.
\qed\end{proof}

The bounds given in Theorems \ref{Union} and \ref{Intersection} are also valid for NOBDDs. Deterministic OBDDs, on the other hand, requires $c\cdot d$ for union operation.

Classically, if a function, say $f$, solved by an NOBDD with width $ d $, then the negation of $ f $ can be solved by another NOBDD with width at most $ 2^d $. By using Corollary \ref{and-classical-quantum} and the result below we conclude that in case of \nuobdd, we cannot provide such a bound.

\begin{corollary} (from Theorem \ref{quantum-cheap})
The function $\neg \andn$ is computable by NUOBDD with width 2.
\end{corollary}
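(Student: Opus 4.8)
The plan is to observe that $\neg\andn$ is literally an instance of the function $\notexactk$ already treated in Theorem \ref{quantum-cheap}. Recall that $\andn(\sigma)=1$ exactly when $\sigma$ contains no $0$, i.e.\ when $\#_1(\sigma)=n$. Hence $\neg\andn(\sigma)=1$ iff $\#_1(\sigma)\neq n$, which is precisely the definition of $\mathtt{notEXACT}^n_n$, the function $\notexactk$ evaluated at the parameter $k=n$. So the first and only real step is to make this identification explicit.

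Having done so, I would simply invoke Theorem \ref{quantum-cheap} with $k:=n$, which is admissible since the hypothesis there is $k\leq n$. That theorem produces an \nuobdd~$N_n$ of width $2$ computing $\notexactk$, and by the identification above this $N_n$ computes $\neg\andn$. Concretely, $N_n$ uses angle $\alpha=\frac{\pi}{n}$, starts in the state $\cos(-n\alpha)\ket{q_1}+\sin(-n\alpha)\ket{q_2}$, and rotates counter-clockwise by $\alpha$ for every $1$ read; the amplitude of the accepting state $\ket{q_2}$ is zero exactly when $\#_1(\sigma)=n$ and nonzero otherwise, as required for nondeterministic acceptance of $\neg\andn$.

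No serious obstacle is expected; the only point worth a sanity check is that the width-$2$ construction of Theorem \ref{quantum-cheap} behaves correctly at the boundary value $k=n$ (in particular that the total rotation never exceeds $\pi$, so that distinct values of $\#_1(\sigma)$ do not accidentally yield the same point on the circle). Since $0\le \#_1(\sigma)\le n$ forces the net rotation angle $(\#_1(\sigma)-n)\alpha$ to lie in $[-\pi,0]$, this causes no collision, and the corollary follows immediately.
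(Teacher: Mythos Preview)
Your approach---identify $\neg\andn$ with $\mathtt{notEXACT}^n_n$ and invoke Theorem~\ref{quantum-cheap} with $k=n$---is exactly what the paper does (the paper gives no separate proof, just the parenthetical ``from Theorem~\ref{quantum-cheap}'').

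However, your sanity check at the boundary $k=n$ is not quite right, and in fact exposes a glitch in the paper's own construction. You argue that distinct values of $\#_1(\sigma)$ give distinct points on the circle because the net angle $(\#_1(\sigma)-n)\alpha$ lies in $[-\pi,0]$. That is true, but it is not the relevant condition: what matters is that the \emph{accepting amplitude} $\sin\bigl((\#_1(\sigma)-n)\alpha\bigr)$ be nonzero whenever $\#_1(\sigma)\neq n$. With $\alpha=\pi/n$ and $\sigma=0^n$ the net angle is exactly $-\pi$, the final state is $-\ket{q_1}$, and the accepting probability is $|\sin(-\pi)|^2=0$. Since $\neg\andn(0^n)=1$, the width-$2$ machine as written rejects a positive instance. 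The same issue already afflicts the proof of Theorem~\ref{quantum-cheap} at $k=n$.

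The fix is immediate: take $\alpha=\pi/(n+1)$ instead of $\pi/n$. Then $(\#_1(\sigma)-k)\alpha$ lies strictly in $(-\pi,\pi)$ for all $0\le\#_1(\sigma)\le n$ and all $1\le k\le n$, so $\sin$ of it vanishes iff $\#_1(\sigma)=k$. With this one-line adjustment your argument (and the paper's) goes through and the corollary holds.
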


\section{Concluding remarks}
\label{Sec:conclusion}
In this paper we investigate the width complexity of nondeteministic unitary OBDDs and compare them with its classical counterpart. Our results are mainly for linear and sublinear widths. As a future work, we plan to investigate the superlinear widths. Here we present a width hierarchy and a similar result is not known for nondeterministic quantum OBDDs using  general quantum operators. We also find interesting possible applications of our results to some other models like quantum finite automata.

\bibliographystyle{splncs03}
\bibliography{tcs}
\end{document}